\DeclareMathOperator{\oj}{j}
\newcommand{\bsy}[1]{\boldsymbol{#1}}
\newcommand{\derpar}[2]{\dfrac{\partial {#1}} {\partial {#2}}}
\newcommand{\blue}[1]{\textcolor{blue}{#1}}
\theoremstyle{plain}
\newtheorem{corollary}{Corollary}
\newtheorem{remark}{Remark}
\newtheorem{proposition}{Proposition}
\newtheorem{assumption}{Assumption}
\newcommand{\vect}[1]{\mathbf{#1}}
\def\diag{\mathrm{diag}}
\def\imagunit{\mathsf{j}} 
\def\re{\mathrm{Re}}
\definecolor{bittersweet}{rgb}{1.0, 0.44, 0.37}
\DeclareMathAlphabet{\mathcalligra}{T1}{calligra}{m}{n}
\begin{document}

\title{Cram{\'e}r-Rao Bounds for Holographic Positioning}

\author{
\IEEEauthorblockN{Antonio A. D'Amico, Andrea de Jesus Torres, Luca Sanguinetti, \emph{Senior Member, IEEE},
Moe Win, \emph{Fellow, IEEE}\vspace{-0.7cm}
\thanks{
Part of this paper was presented at the 2021 Asilomar Conference on Signals, Systems, and Computers~\cite{torres2021cramerrao}.
\newline\indent A. A. D'Amico, A. de Jesus Torres and Luca Sanguinetti are with the Dipartimento di Ingegneria dell'Informazione, University of Pisa, Italy. (e-mail: \{a.damico, luca.sanguinetti\}@unipi.it, andrea.dejesustorres@phd.unipi.it).
\newline \indent M. Win is with the Wireless Information and Network Sciences Laboratory (WINSLab), Massachusetts Institute of Technology (MIT), Cambridge, MA 02139 USA (e-mail: moewin@mit.edu) 
\newline \indent The research was supported by the MIT-UNIPI grant (VIII call) from MISTI Global Seed Funds in the framework of the MIT-Italy Program. L. Sanguinetti and A. A. D'Amico were also partially supported by the Italian Ministry of Education and Research (MIUR) in the framework of the CrossLab project (Departments of Excellence).
}}
}
\maketitle

\begin{abstract}
Multiple antennas arrays \blue{combined with high carrier frequencies} play a key role in wireless networks for communications but also for localization and sensing applications. 
To understand the fundamental limits of \blue{electromagnetically} large antenna arrays for localization, this paper combines wave propagation theory with estimation theory, and computes the Cram{\'e}r-Rao Bound (CRB) for the estimation of the source position on the basis of the three Cartesian components of the electric field, observed over a rectangular surface area. The problem is referred to as holographic positioning and \blue{it intrinsically depends on} the radiation angular pattern of the transmitting source, which is typically ignored in standard signal processing models. We assume that the source is a Hertzian dipole, and address the holographic positioning problem in both cases, that is, with and without a priori knowledge of its orientation. To simplify the analysis and gain further insights, we also consider the case in which the dipole is located on the line perpendicular to the surface center. Numerical and asymptotic results are given to quantify the CRBs, and to quantify the effect of various system parameters on the ultimate estimation accuracy. It turns out that surfaces of size, comparable to the distance, are needed to guarantee a centimeter-level accuracy in the mmWave bands. \blue{Moreover, we show that the CRBs with and without a priori knowledge of the source dipole orientation are numerically the same. The provided CRBs are also used to benchmark different maximum-likelihood estimators (MLEs) derived on the basis of a discrete representation of different models of the electric field. The analysis shows that, if the standard models are used (neglecting the radiation angular pattern), the MLE accuracy is far from the CRB. On the other hand, it approaches the CRB when the more detailed electromagnetic model is considered.}

\end{abstract}
\smallskip
\begin{IEEEkeywords}
Cram{\'e}r-Rao bound, wave propagation theory, source localization, planar electromagnetic surfaces, holographic positioning.
\end{IEEEkeywords}

\section{Introduction}

The estimation accuracy of signal processing algorithms for positioning is fundamentally limited by the quality of the underlying measurements. For time-based measurements, high resolution and high accuracy can only be obtained when a large bandwidth is available. Improvements can be achieved by using multiple anchor nodes, located in known positions. Antenna arrays have thus far only played a marginal role in positioning since the small arrays of today's networks provide little benefit. With future networks, the situation may change significantly. Indeed, the 5G technology standard\textcolor{blue}{~\cite{3gppRel16}} is envisioned to operate in the mmWave bands~\cite{Lee2018Spectrum5G}, while 6G research is already focusing on the so-called sub-terahertz (THz) bands, i.e., in the range 100 -- 300\,GHz. The small wavelength of high-frequency signals makes it practically possible to envision arrays with a very large number of antennas, as never seen before. The advent of large spatially-continuous electromagnetic surfaces interacting with wireless signals pushes even further this vision. \textcolor{blue}{From the technological point of view, metamaterials represent appealing candidates toward the creation of software-controlled metasurfaces, which can lead to a viable way of realizing spatially-continuous electromagnetic surfaces~\cite{dardari2020holographic}.} Research in this direction is taking place under the names of Holographic MIMO~\cite{Huang2020,Pizzo2019a}, large intelligent surfaces~\cite{Hu2018a,Hu2018b}, and reconfigurable  intelligent surfaces~\cite{Basar2019a,DiRenzo2020}. All this opens new dimensions and brings new opportunities for communications but also for localization and sensing~\cite{bourdoux20206g}. 
\vspace{-0.3cm}
\subsection{Motivation and contribution}

A side-effect of using large arrays or surfaces combined with high carrier frequencies is to push the electromagnetic propagation  towards the regime in which the wavefront associated to the signal transmitted by the source tends to be spherical and cannot be approximated by a plane wave. In this regime, also the distance information, not only the angle-of-arrival, can be inferred from it. \textcolor{blue}{This concept is not new~\cite{Benjamin2019} and it has been widely used to develop signal processing algorithms that exploit the spherical wavefront properties to communicate in low rank propagation channels (e.g.,~\cite{Ingram2005,Bohagen2009,JinLensArray2020}) and to pinpoint the position of the source with high accuracy~\cite{Hu2018b,AlegriaLISPositioning2019,Dardari2021,DELMAS2016117,Cadre1995,Grosicki2005,Korso2012,Siwei2019,Xuefeng2017,Guerra2021}.} In this latter context, the question arises of the ultimate accuracy that can be achieved in localization operations. This is important in order to provide benchmarks for evaluating the performance of actual estimators. 

Motivated by the above consideration, in this paper we combine electromagnetic propagation concepts with estimation theory, and compute the Cram{\'e}r-Rao Bound (CRB) for source localization. \textcolor{blue}{In doing so, we consider the three Cartesian components of the electric vector field, observed over a rectangular surface, situated in the Fraunhofer radiation region of the source~\cite[Ch. 15]{OrfanidisBook}. In general, the three Cartesian components depend on the radiation vector~\cite[Ch. 15]{OrfanidisBook}, which is in turn determined by the current distribution inside the source. This functional dependence is typically overlooked in \textit{standard} signal processing models used for CRB computations and may lead to estimation algorithms with less accuracy than those based on the complete model.} 
\blue{Note that similar considerations can be found in~\cite{Benjamin2019} where \textit{standard} models are compared to the so-called \textit{analytic model}, based on electromagnetic theory. Particularly,~\cite{Benjamin2019} observes that the former typically ignore the nature of the source whose physical characteristics have a profound impact on the generated electromagnetic fields. We will elaborate further on the differences between what we call analytic model and what we collectively denote as standard models later on, after introducing the necessary mathematical machinery.}


\blue{The first objective of this paper is to compute and analyze the CRB for the localization of a source on the basis of the analytic model, which stems from first principles of electromagnetic theory. In doing so, we assume that the source is an elementary Hertzian dipole and make use of all the three cartesian components of the electric field for estimating its position. This allows us to derive a more fundamental limit to the accuracy of estimators that may possibly exploit the entire electric field. This is what we call \emph{holographic positioning} where the holographic term dates back to the ancient greek and literally means ``describe everything''~\cite{dardari2020holographic}. {\color{orange}The concept is often connected with metasurfaces, which are two-dimensional surfaces consisting of arrays of reconfigurable elements of metamaterial. We refer to~\cite{Tsilipakos2020} for a recent survey on the implementation aspects for metasurfaces}. The main results of our CRB analysis can be summarized as follows. 
\begin{itemize}
  \item We show how the ultimate estimation accuracy for the localization of a source depends strongly on its orientation. To the best of authors' knowledge, such a dependence has never been pointed out before, and comes from considering the radiation vector in the expression of the received electric field.
  \item The orientation of the source can be assumed known or unknown to the receiver. A second interesting contribution of this paper is to show that, under practical conditions, the CRB computed assuming that the orientation is unknown coincides with that computed assuming that it is perfectly known. This does not mean that we can ignore the effect of dipole orientation in the estimation process but only that the joint estimation of orientation and position ultimately provides the same localization accuracy as if the orientation were known.
  \item To gain insights about the impact on the estimation accuracy of different system parameters (such as wavelength, size of the receiving surface), we assume that the dipole is located on the line perpendicular to the surface center. In this case, closed-form expressions can be computed that show that the CRB scales quadratically with the wavelength. This is in line with the results in \cite{Hu2018b} where a standard model was used. Also, we show that the accuracy in the estimation of some components of the position vector improves unboundedly as the ratio between the size of the receiving surface and its distance from the source increases. This is a new result that cannot be found in prior works. 
\end{itemize}
}
\blue{A second important objective of this paper is to use the CRB analysis to understand whether or not positioning algorithms based on the analytic model can provide gains, compared to those based on standard models. To partially answer this question, we consider a planar array made of Hertzian dipoles (filling the receiving surface) and we analyze the performance of the three different maximum-likelihood estimators (MLEs) of the source position, which are derived on the basis of three different models of the electric field. In particular, the first estimator makes use of the analytic model adopted in this paper, the second one is based on the model from~\cite{Hu2018b}, and the third one relies on the common planar approximation of the received electromagnetic wave.  Numerical results are used to compare the different estimators. It turns out that the estimation accuracy of the MLE based on the analytic model is very close to the theoretical limits provided by the CRB. Moreover, we show that it outperforms the other two estimators as it takes into account the orientation of the transmitting dipole. Finally, we show that the estimation accuracy depends on the source orientation, but not on its a priori knowledge. This is exactly in line with the behavior predicted by the CRB analysis.}


Compared to the conference version~\cite{torres2021cramerrao}, this paper provides a more detailed derivation and discussion of the electromagnetic vector model and computes the CRBs in both cases, that is, with and without a priori knowledge of the source dipole orientation. Also, it provides all the derivations that were omitted in~\cite{torres2021cramerrao}, \blue{and analyzes the performance of MLEs based on different electric field models.}

\subsection{Paper outline and notation}
The paper is organized as follows. In Section~\ref{SectionII}, we start from first principles of wave propagation theory and provide the most general form of the electric field, which is then used to compute the CRB for the estimation of the position of the transmitting source. We also briefly describe how simplified models can be obtained from the general form. In Section~\ref{sec:est3D}, we simplify the analysis by assuming that the transmitting source is an elementary dipole pointing in an arbitrary direction. The CRBs for the estimation of the dipole position are computed for both cases, i.e., with and without a priori knowledge of its orientation. To gain insights into the CRBs, in Section~\ref{Sec:case-study} we assume that the dipole is located in the central perpendicular line (CPL) and is vertically oriented. Insightful closed-form expressions are provided to quantify the ultimate estimation accuracy. In Section~\ref{Sec:Numerical_analysis}, we provide some numerical results as a function of system parameters, e.g., distance, array size, carrier frequency. \blue{To quantify the impact of the electric field model on the estimation accuracy, in Section VI we analyze the performance of three different MLEs. The first is based on the adopted analytic model while the other two are based on models adopted in the literature.} The major conclusions are drawn in Section~\ref{Sec:conclusions}.

The following notation is used throughout the paper. In the $\mathbb{R}^3$ Euclidean space, an arbitrary spatial vector ${\bf r}$  is represented as $\vect{r} =(\alpha,\beta,\gamma)$, where $(\alpha,\beta,\gamma)$ are the \textit{components} of $\vect{r}$ along the directions of three given orthonormal vectors ${\bf \hat u}_1, {\bf \hat u}_2, {\bf \hat u}_3$. Equivalently, we write $\vect{r}=\alpha {\bf \hat u}_1+\beta {\bf \hat u}_2+\gamma {\bf \hat u}_3$.
The length of $\vect{r}$ is $||{\bf r}||= \sqrt{\alpha^2 + \beta^2 + \gamma^2}$, and $\bf{\hat r}= \vect{r}/||{\bf r}||$ is a unit vector that is parallel to $\vect{r}$. We use $ {\bf a} \times {\bf b}$ and ${\bf a}\cdot{\bf b}$ to denote, respectively, the cross and the dot product between ${\bf a}$ and ${\bf b}$. An arbitrary point $P$ in $\mathbb{R}^3$ is described by its three \textit{coordinates} with respect to a reference system (cartesian or spherical). A point $P$ can also be represented by a vector ${\bf r}$. In such a case, $P$ is the endpoint of ${\bf r}$, whose starting point is fixed. 

\subsection{Reproducible research}
The Matlab code used to obtain the simulation results will be made available upon completion of the review process.



\begin{figure}
	
	\centering
	\begin{overpic}[width=1\columnwidth]{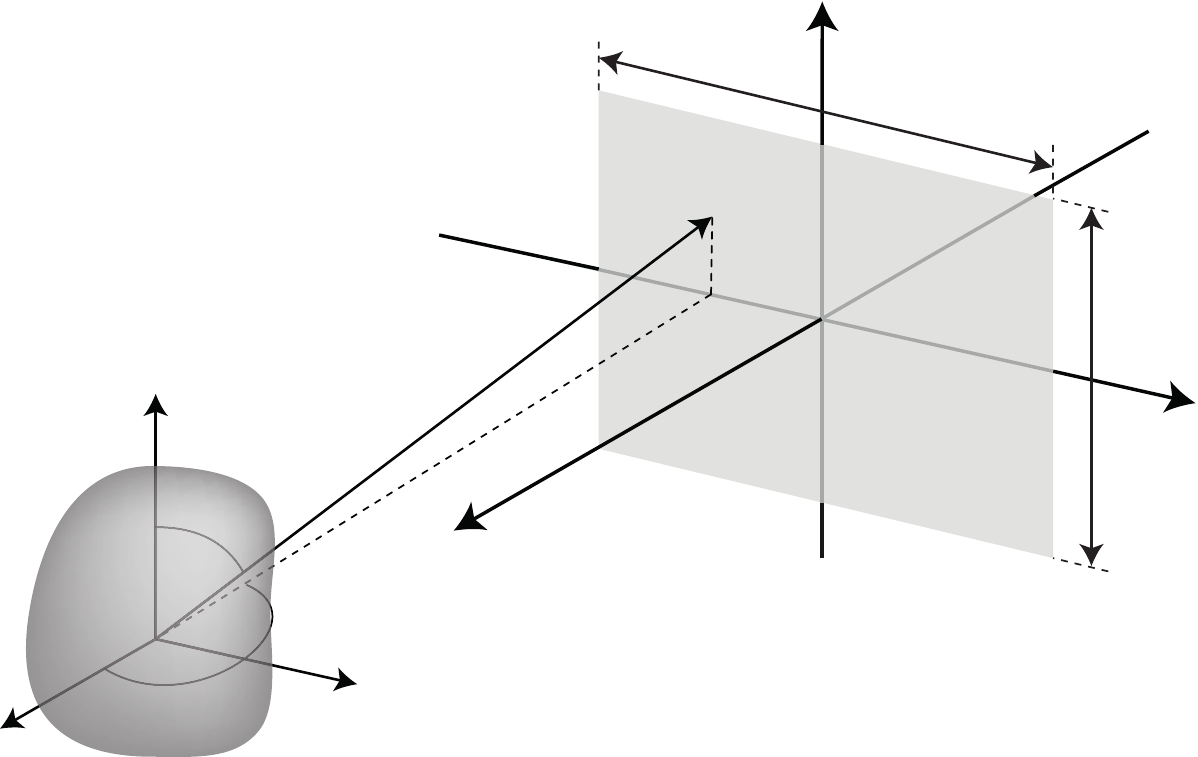}
	\put(67,33){$O$} 
	\put(40,16){$X$}
	\put(97,24){$Y$}
	\put(71,62){$Z$}
	
	\put(14,8){$C$}
	\put(0,-1){$ X'$}
	\put(30,5){$Y'$}
	\put(10,30){$Z'$}

	\put(56,47){${\mathbf{e}}({\mathbf{r}},t)$}
	\put(60,42.5){$P$}
	
	\put(3,12){${\mathbf{j}}({\mathbf{s}},t)$}

	\put(14,1){$\mathcal R_s$}
	\put(82,20){$\mathcal R_o$}
	
	\put(40,32){$\mathbf{r}$}
	\put(16,20){$\theta$}
	\put(19,10){$\phi$}
	
	\put(70,54){\rotatebox{-13}{\footnotesize{Width $L$}}}
	
        \put(91.1,43.6){\rotatebox{-90}{\footnotesize{Height $L$}}}
        \end{overpic}
       	
        \caption{Geometry of the considered system.}\vspace{-0.5cm}
        \label{fig:source_volume}

\end{figure}
\section{Signal model and problem formulation}\label{SectionII}

Consider the system depicted in Fig.~\ref{fig:source_volume} in which an electric current density ${\bf j}({\bf s}, t)$, inside a source region $\mathcal R_s$, \textcolor{orange}{with ${\bf s}$ a spatial vector identifing a generic point in $\mathcal R_s$,} generates an electric field ${\bf e}({\bf r}, t)$ at a generic point $P$, identified through the spatial vector $\vect{r}$.
We consider only monochromatic sources and fields of the form 
\begin{equation}
{\bf j}(\vect{s},t) = \mathrm{Re}\left\{{\bf j}(\vect{s}) \mathrm{e}^{\imagunit \omega t}\right\}
\end{equation}
 and 
 \begin{equation}
 {\bf e}(\vect{r}, t) = \mathrm{Re}\left\{{\bf e}(\vect{r}) \mathrm{e}^{\imagunit \omega t}\right\}\end{equation} 
 where $\omega$ is frequency in radians/second. In this case, Maxwell's equations can be written only in terms of the current and field \textit{phasors}, ${\bf j}(\vect{s})$ and ${\bf e}(\vect{r})$ \cite[Ch. 1]{ChewBook}.

We call $C$ the \textit{centroid} of the source region $\mathcal R_s$ and assume that the electric field ${\bf e}(\vect{r})$, produced by  ${\bf j}(\vect{s})$, is measured over an \textit{observation region} $\mathcal R_o$ outside $\mathcal R_s$. The electromagnetic field propagates in a homogeneous and isotropic medium with neither obstacles nor reflecting surfaces. In other words, there is only a line-of-sight link between $\mathcal R_s$ and $\mathcal R_o$. 

\subsection{Signal model}
The measured field is the sum of ${\bf e}(\vect{r})$ and a random noise field ${\bf n}(\vect{r})$, i.e.,
\begin{equation}
\label{SigNoise}
{\boldsymbol \xi} (\vect{r})={\bf e}(\vect{r})+{\bf n}(\vect{r})
\end{equation}
where ${\bf n}(\vect{r})$ is generated by electromagnetic sources outside $\mathcal R_s$. \textcolor{blue}{In an unbounded, homogeneous and isotropic medium, the electric field ${\bf e}(\vect{r})$ can be written as \cite[Ch. 1]{ChewBook}
\begin{equation}
\label{e1}
{\bf e} ({\bf r})=-\imagunit k Z_0 \int\limits_{\mathcal R_s}  \overline{\overline{\bf G}}(\mathbf{r}-\mathbf{s}) \cdot {\bf j}(\mathbf{s})  {\rm d}\mathbf{s}
\end{equation}
where $k=2 \pi/ \lambda$ is the \textit{wavenumber}, $\lambda=2 \pi c /\omega$ is the \textit{wavelength}, $Z_0$ is the intrinsic impedance of the medium, and $\overline{\overline{\bf G}}(\mathbf{p})$ is the dyadic Green's function, given by \cite{poon2005degrees}
\begin{equation}
\label{DyadicGF1}
\begin{split}
\overline{\overline{\bf G}}(\mathbf{p})&= g(p)\left[\left(1-\imagunit\dfrac{1}{kp}-\dfrac{1}{k^2p^2}\right)\overline{\overline{\bf I}} -\left(1-\imagunit \dfrac{3}{kp} - \dfrac{3}{k^2p^2}\right) \hat{\bf{p}} \hat{\bf{p}}\right]
\end{split}
\end{equation}
where $\overline{\overline{\bf I}}$ is the unit dyad, $p=\|{\bf p}\|$, $\hat{\bf{p}}={\bf p}/p$, and $g(p)$ is the scalar Green's function, i.e., 
\begin{equation}
\label{GreenScalare}
g(p)=\dfrac{e^{- \imagunit k p}}{4 \pi p}.
\end{equation}
Consider a cartesian coordinate system $CX'Y'Z'$ with the origin in the centroid $C$ in Fig.~\ref{fig:source_volume}, and let ${\bf{\hat x}}$, ${\bf{\hat y}}$ and ${\bf{\hat z}}$, the unit vectors in the $X'$, $Y'$ and $Z'$ directions, respectively. We make the following assumption.
\begin{assumption}[\textcolor{blue}{Fraunhofer radiation region of the source}]
Let $r_o$ be the minimum distance of $C$ from $\mathcal R_o$ and denote by $l_s$ the largest dimension of $\mathcal R_s$. We assume that $r_o \gg l_s$ and $r_o \gg 2 l_s^2 /\lambda$. These conditions define the so-called far-field or Fraunhofer radiation region of the source~\cite[Ch. 15]{OrfanidisBook}.
\label{ass:Frau}
\end{assumption}
\textcolor{blue}{Under Assumption 1}, the electric field ${\bf e}(\vect{r})$ can be approximated as~\cite[Ch.\,15]{OrfanidisBook}:
\begin{equation}
\label{E_P.0}
{\bf e}(\vect{r})= G(r) \left[{\bf \hat r} {\bf \times} {\bf R}(\theta,\phi) \right] {\bf \times} {\bf \hat r}
\end{equation}
where 
\begin{equation}
\label{GreenScalare2}
G(r)=-\imagunit k Z_{0} g(r)
\end{equation}
and $(r,\theta,\phi)$ are the spherical coordinates (with respect to $CX'Y'Z'$) of a generic point ${\bf r} \in \mathcal R_o$, i.e., ${\bf r} = r \cos \phi \sin \theta {\bf \hat x} + r \sin \phi \sin \theta {\bf \hat y} + r \cos \theta {\bf \hat z}$. Also, ${\bf \hat r}$ is the unit vector in the radial direction and ${\bf R}(\theta,\phi)$ is the \textit{radiation vector}. This is related to the source current distribution ${\bf j}({\bf s})$ as follows~\cite[Eq. (15.7.5)]{OrfanidisBook}
\begin{equation}
\label{RadVect}
{\bf R}(\theta,\phi)=\int_{\mathcal R_s} {\bf j}({\bf s}) \mathrm{e}^{\imagunit {\bf k}(\theta,\phi) \cdot {\bf s}} d {\bf s}
\end{equation}
where ${\bf k}(\theta,\phi)=k {\bf \hat r}$ is the \textit{wavenumber vector}. Denote by $R_{r}(\theta,\phi)$, $R_{\theta}(\theta,\phi)$ and $R_{\phi}(\theta,\phi)$ the components of the radiation vector ${\bf R}(\theta,\phi)$ along the ${\bf \hat r}$, ${\bsy{\hat{\theta}}}$ and ${\bsy{\hat{\phi}}}$ directions. Then, we may write:
\begin{equation}
\label{RadVect.1}
{\bf R}(\theta,\phi)=R_{r}(\theta,\phi){\bf \hat r}+ R_{\theta}(\theta,\phi) {\bsy{\hat{\theta}}}+R_{\phi}(\theta,\phi) {\bsy{\hat{\phi}}}
\end{equation}
where 
\begin{align}\label{r}
{\bf \hat r} & = \sin \theta \cos \phi {\bf \hat x} + \sin \theta \sin \phi{\bf \hat y} + \cos \theta{\bf \hat z}\\\label{theta}
{\bsy{\hat{\theta}}} & =  \cos \theta \cos \phi {\bf \hat x} +  \cos \theta \sin \phi{\bf \hat y} - \sin \theta {\bf \hat z}\\\label{phi}
{\bsy{\hat{\phi}}} & = - \sin \phi{\bf \hat x} + \cos \phi{\bf \hat y}.
\end{align}
Plugging \eqref{RadVect.1} into \eqref{E_P.0} yields
\begin{equation}
\label{E_P}
{\bf e}(\vect{r})= G(r) \left[R_{\theta}(\theta,\phi) {\bsy{\hat{\theta}}}+R_{\phi}(\theta,\phi) {\bsy{\hat{\phi}}} \right]
\end{equation}
since ${\bf \hat r} \times {\bf \hat r} = {\bf 0}$, $({\bf \hat r} \times {\bsy{\hat{\theta}}} )\times {\bf \hat r} = {\bsy{\hat{\theta}}}$ and $({\bf \hat r} \times {\bsy{\hat{\phi}}} )\times {\bf \hat r} = {\bsy{\hat{\phi}}}$.\footnote{Notice that ${\bf \hat r} \times {\bsy{\hat{\theta}}} = -{\bsy{\hat{\phi}}}$, $-{\bsy{\hat{\phi}}}\times {\bf \hat r}={\bsy{\hat{\theta}}}$, ${\bf \hat r} \times {\bsy{\hat{\phi}}} = -{\bsy{\hat{\theta}}}$ and $-{\bsy{\hat{\theta}}}\times {\bf \hat r}={\bsy{\hat{\phi}}}$.} Notice that ${\bf e}(\vect{r})$ in~\eqref{E_P} is completely determined by the transverse component 
\begin{equation}\label{eq:transverse_component}
{\bf R}_{\perp}(\theta,\phi) = R_{\theta}(\theta,\phi) {\bsy{\hat{\theta}}}+R_{\phi}(\theta,\phi) {\bsy{\hat{\phi}}} 
\end{equation}
of the radiation vector ${\bf R}(\theta,\phi)$~\cite[Ch.~15]{OrfanidisBook}. Similar to \cite{Benjamin2019}, in what follows we refer to \eqref{E_P.0} as the \textit{analytic} model.\\
\indent Denote by $\xi_x(\vect{r})$, $\xi_y(\vect{r})$ and $\xi_z(\vect{r})$, the cartesian components of ${\boldsymbol \xi} (\vect{r})$ along the ${\bf \hat x}$, ${\bf \hat y}$ and ${\bf \hat z}$ directions, respectively. From \eqref{SigNoise}, we have
\begin{equation}
\label{SigNoiseX}
\xi_x(\vect{r})= [{\bf e}(\vect{r})+{\bf n}(\vect{r})]\cdot {\bf \hat x}=
e_x(\vect{r})+n_x(\vect{r})
\end{equation}
\begin{equation}
\label{SigNoiseY}
\xi_y(\vect{r})= [{\bf e}(\vect{r})+{\bf n}(\vect{r})]\cdot {\bf \hat y} = e_y(\vect{r})+n_y(\vect{r})
\end{equation}
\begin{equation}
\label{SigNoiseZ}
\xi_z(\vect{r})= [{\bf e}(\vect{r})+{\bf n}(\vect{r})]\cdot {\bf \hat z}= e_z(\vect{r})+n_z(\vect{r})
\end{equation}
where $e_x(\vect{r}),e_y(\vect{r})$ and $e_z(\vect{r})$ are obtained from \eqref{E_P} by using~\eqref{r} -- \eqref{phi}. This yields
\begin{align}
\label{ex}
e_x(\vect{r})&=G(r) \left[R_{\theta}(\theta,\phi) \cos \theta \cos \phi-R_{\phi}(\theta,\phi) \sin \phi \right]\\
\label{ey}
e_y(\vect{r})&=G(r) \left[R_{\theta}(\theta,\phi) \cos \theta \sin \phi+R_{\phi}(\theta,\phi) \cos \phi \right]\\
\label{ez}
e_z(\vect{r})&=-G(r) R_{\theta}(\theta,\phi) \sin \theta.
\end{align}
\subsection{Problem formulation}
\label{CRB_full}
We aim at computing the CRB for the estimation of the position of the centroid $C$ in Fig.~\ref{fig:source_volume} based on the noisy vector ${\boldsymbol \xi} (\vect{r})$ over the observation region $\mathcal R_o$, whose cartesian components are given by~\eqref{SigNoiseX} -- \eqref{SigNoiseZ}. For this purpose, the following assumptions are further made.
\begin{assumption}\label{assumption1}
The observation region is a square domain parallel to the $Y'Z'$ coordinate plane. In particular, assume that
\begin{equation}\notag
\mathcal R_o=\big\{ (x',y',z'): x'=x'_o, |y'-y'_o| \le L/2, |z'-z'_o| \le L/2 \big\}
\end{equation}
where $(x'_o,y'_o,z'_o)$  are the cartesian coordinates of the center $O$ of $\mathcal R_o$ in the system $CX'Y'Z'$. 
\end{assumption}
\begin{assumption}[\textcolor{blue}{Random noise field modelling}]
Following~\cite{Jensen2008channel}--\hspace{1sp}\cite{Gruber2008EIT}, we model ${\bf n}(\vect{r})$ as a spatially uncorrelated zero-mean complex Gaussian process with correlation function 
\begin{equation}
\label{ }
\mathrm{E} \left\{{\bf n}(\vect{r}) {\bf n}^{\dagger}(\vect{r}') \right\}=\sigma^2 {\bf I} \delta(\vect{r}-\vect{r}')
\end{equation}
where ${\bf I}$ is the identity matrix, $\delta(\cdot)$ is the Dirac's delta function, and $\sigma^2$ is measured in $\mathrm V ^2$, where $\mathrm V$ indicates volts~\cite{Gruber2008EIT}.
\end{assumption}
The cartesian system $OXYZ$  in Fig.~\ref{fig:source_volume} is obtained by $CX'Y'Z'$ through a pure translation. The position of $C$ in the system $OXYZ$ is given by the cartesian coordinates $(x_C,y_C,z_C)$. Accordingly, we have that $x'=x-x_C$, $y'=y-y_C$ and $z'=z-z_C$.}

Let ${\bf u}=(x_C,y_C,z_C)$ denote the vector collecting the \emph{unknown} coordinates of $C$. The CRB for the estimation of the $i$th entry of ${\bf u}$, say $u_i$, is (e.g.,~\cite{Kay1993a})
\begin{equation}
\label{eq:CRBs}	
\mathrm{CRB}(u_i) = \left[\mathbf{F}^{-1}\right]_{ii}
\end{equation}
where ${\bf F}$ is the Fisher's Information Matrix (FIM). The latter is a $3\times 3$ hermitian matrix, whose elements are computed as~\cite[Appendix 15C]{{Kay1993a}}
\begin{equation}
\label{FIM}
\begin{split}
[\mathbf{F}]_{mn}=\dfrac{2}{\sigma^2}\re\left\{\iint_{-\frac{L}{2}}^{\frac{L}{2}} f_{mn}(y,z) dy dz\right\}
\end{split}
\end{equation}
where 
\begin{equation}
\label{fmn}
f_{mn}(y,z)=\derpar{e_x(\vect{r})}{u_m} \derpar{e_x^\ast(\vect{r})}{u_n}+\derpar{e_y(\vect{r})}{u_m} \derpar{e_y^\ast(\vect{r})}{u_n} +\derpar{e_z(\vect{r})}{u_m} \derpar{e_z^\ast(\vect{r})}{u_n}
\end{equation}
and the integration is performed over the observation region $\mathcal R_o$, as defined in Assumption~\ref{assumption1}. 
The functional dependence of $e_x(\vect{r})$, $e_y(\vect{r})$ and $e_z(\vect{r})$ on ${\bf u}$ is hidden in the spherical coordinates $(r,\theta,\phi)$. Indeed, we have
\begin{align}
\label{SferCart1}
r &=||{\bf r}||=\sqrt{x^2_C+(y-y_C)^2+(z-z_C)^2}\\
\label{SferCart2}
\cos \theta &=\dfrac{z-z_C}{r}\\
\label{SferCart3}
\tan \phi &=-\dfrac{y-y_C}{x_C}.
\end{align} 
Accordingly, we can write
\begin{equation}
\label{DerparE}
\derpar{e_v(\vect{r})}{u_i}=\derpar{e_v(\vect{r})}{r}\derpar{r}{u_i}+\derpar{e_v(\vect{r})}{\theta}\derpar{\theta}{u_i}+\derpar{e_v(\vect{r})}{\phi}\derpar{\phi}{u_i}
\end{equation}
with $v \in \{x,y,z\}$ and $i \in \{1,2,3\}$. The derivatives of $r$, $\theta$ and $\phi$ with respect to $u_i$ can be easily computed from \eqref{SferCart1} -- \eqref{SferCart3}, whereas the partial derivatives of $e_v(\vect{r})$ with respect to $r$, $\theta$ and $\phi$ can be obtained from \eqref{ex} -- \eqref{ez}.  

It is clear that the computation of \eqref{DerparE} requires knowledge of the components $R_{\theta}(\theta,\phi)$ and $R_{\phi}(\theta,\phi)$ of the radiation vector ${\bf R}(\theta,\phi)$. In other words, the evaluation of \eqref{DerparE} implicitly assumes that current distribution inside the source region $\mathcal R_s$ is a priori known. However, this is not always the case in practical applications. When it is \emph{not} a priori known, the problem must be reformulated to take into account this lack of information. In the remainder of this paper, we consider a specific scenario in which the current source is an elementary (i.e., short) dipole with an arbitrary orientation and address the estimation problem in both cases, that is, \emph{with} and \emph{without} a priori knowledge of the current distribution.

\vspace{-0.7cm}
\textcolor{blue}{\subsection{Discussion on prior adopted electromagnetic models}}

\textcolor{blue}{The expression in \eqref{E_P} provides the electric field ${\bf e}(\vect{r})$ in a point ${\bf r}$ lying in the Fraunhofer radiation region of the source (i.e., under Assumption~1). We see that ${\bf e}(\vect{r})$ is the product of the two terms: $G(r)$ and ${\bf R}_{\perp}(\theta,\phi) = R_{\theta}(\theta,\phi) {\bsy{\hat{\theta}}}+R_{\phi}(\theta,\phi) {\bsy{\hat{\phi}}}$. The first term is given in \eqref{GreenScalare2} and represents a scalar spherical wave, which accounts for the distance $r$ between the source and the point, as given in~\eqref{SferCart1}. The second term is in~\eqref{eq:transverse_component} and takes into account the vector nature of ${\bf e}(\vect{r})$ as well as its \emph{dependence on the current distribution inside the transmitting volume}. In general, such a dependence is overlooked in the literature. To the best of our knowledge,~\cite{Benjamin2019} is the only paper in which it is recognized that the commonly adopted models do not consider the characteristics of the source (transmit antenna type, size, orientation, etc.), although it may significantly affect the structure of the received electric field.} 
\blue{In general, the standard models for CRB computation and position estimation have the following form~\cite{Benjamin2019}:
\begin{equation}
\label{SM1}
 \eta({\bf r}) = s({\bf r}) + n({\bf r}) 
 \end{equation}
where 
\begin{equation}
\label{SM2}
 s({\bf r})= f({\bf e}(\vect{r}))
 \end{equation}
 is some \emph{scalar} function $f(\cdot)$ of the received electric field and $n({\bf r})$ is additive noise.}
\textcolor{blue}{The vast majority of models is obtained as a further simplification of~\eqref{SM1} in which only the spherical wave term $G(r)$ is considered, e.g.,~\cite{Guerra2021}. This leads to:
\begin{equation}
\label{SM3}
 s({\bf r}) = \alpha G(r)
\end{equation}
where $\alpha$ is a scaling parameter independent of the observation point.
The analyses provided in \cite{Dardari2021}, \cite{DELMAS2016117}, and \cite{Korso2012}, are based on the above model.
} 
\textcolor{blue}{A more accurate scalar model is considered in \cite{Hu2018b}. Here, the authors assume that the received signal can be written as\cite[Eq. (2)]{Hu2018b} 
\begin{equation}
\label{Lund1}
 s({\bf r}) = \beta G(r) \sqrt{\dfrac{x_{C}}{r}}
 \end{equation}
 where the additional factor $\sqrt{x_{C}/r}$ accounts for the angle-of-arrival of the transmitted signal. The same model is adopted in \cite{AlegriaLISPositioning2019}.}
 
\blue{\subsection{The planar wave approximation}
In the region $\mathcal R_o$, we have that ${\bf r} ={\bf r}_C + {\bf d}$, where ${\bf r}_C=-x_C{\bf \hat x}+y_C{\bf \hat y}+z_C{\bf \hat z}$ is the vector from $C$ to $O$ and ${\bf d}=y{\bf \hat y}+z{\bf \hat z}$ is the vector from $O$ to $P$. Accordingly, \eqref{SferCart1} reduces to
\begin{equation}
\label{r_rc_d}
r=\|{\bf r}_C + {\bf d}\|=r_C\sqrt{1+2({\bf \hat r}_C \cdot {\bf \hat d})\dfrac{d}{r_C}+\dfrac{d^2}{r_C^2}}
\end{equation}
with $r_C=\|{\bf r}_C\|$ and $d=\| {\bf d} \|$. In the case $r_C \gg d$, we can replace $r$ with $r_C$ in the denominator of \eqref{GreenScalare} and obtain the following approximation:
\begin{equation}\label{Gr1}
G(r) \approx -\imagunit k Z_0 \dfrac{\mathrm{e}^{-\imagunit kr}}{4 \pi r_C}.
\end{equation}
As for the exponent $\mathrm{e}^{-\imagunit kr}$, approximations may be obtained by considering the Taylor series expansion $\sqrt{1+x} \approx 1+x/2-x^2/8$, valid for small values of $x$. Applying this approximation to \eqref{r_rc_d} yields
\begin{equation}
\label{ }
r \approx r_C+({\bf \hat r}_C \cdot {\bf \hat d})d+[1-({\bf \hat r}_C \cdot {\bf \hat d})^2]d^2/2r_C
\end{equation}
and \eqref{Gr1} thus reduces to 
\begin{equation}\label{Gr2}
G(r) \approx -\imagunit k Z_0 \dfrac{\mathrm{e}^{-\imagunit kr_C}}{4 \pi r_C} \mathrm{e}^{-\imagunit k[d \cos \psi+\sin^2 \psi (d^2/2r_C)]}
\end{equation}
where we have called $\cos \psi=({\bf \hat r}_C \cdot {\bf \hat d})$. This expansion is called the \textit{Fresnel approximation}~\cite{Benjamin2019}. In the case $r_C \gg L^2/\lambda$, we can retain only the first-order term in the exponent of~\eqref{Gr2} and thus obtain
\begin{equation}\label{Gr3}
G(r) \approx -\imagunit k Z_0 \dfrac{\mathrm{e}^{-\imagunit kr_C}}{4 \pi r_C} \mathrm{e}^{-\imagunit kd \cos \psi}
\end{equation}
which represents the well-known \textit{planar} approximation of the spherical wave in \eqref{GreenScalare2}.}


 
\begin{figure}[t!]
\centering
    \begin{overpic}[width=1\columnwidth]{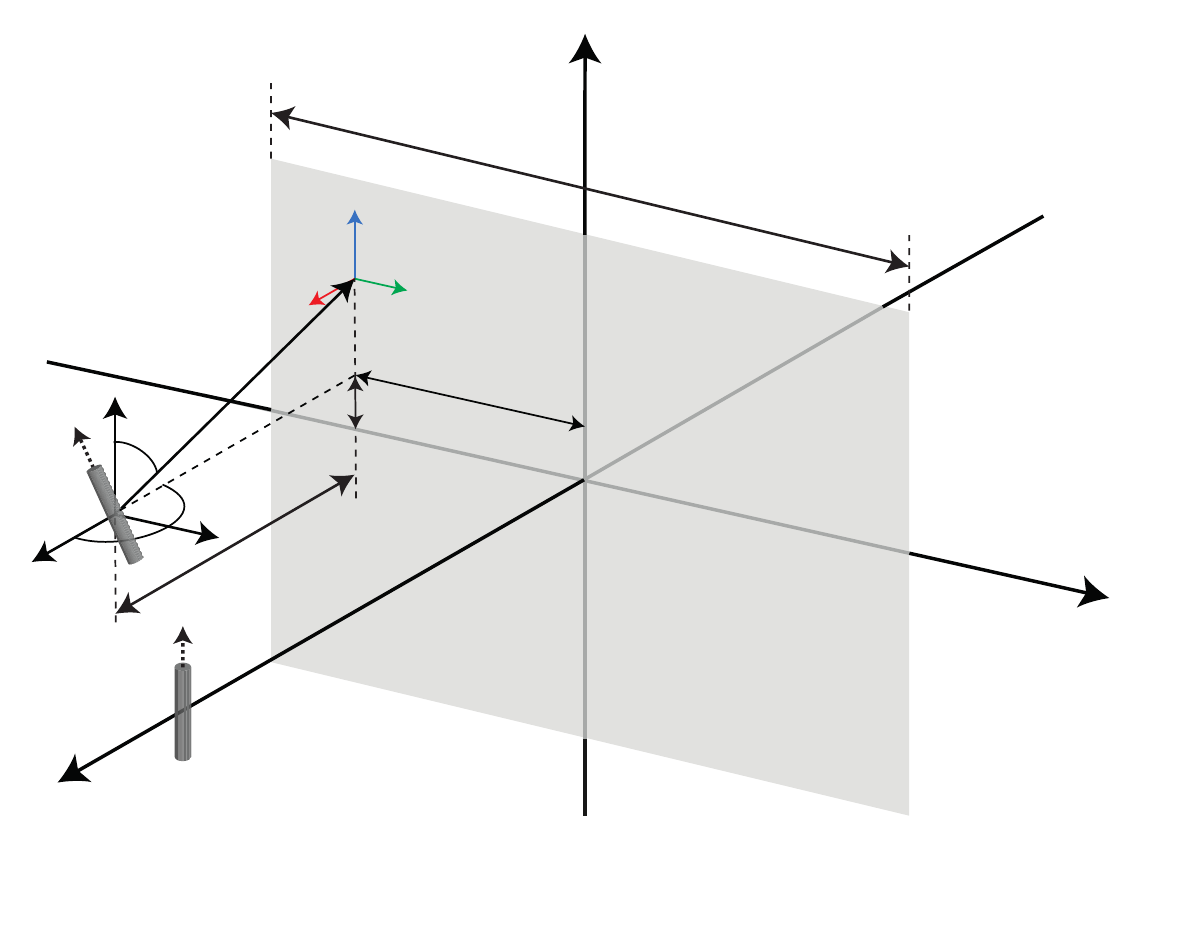}
    \put(50,38){$O$} 
    \put(88,24){$Y$}
    \put(51,73){$Z$}
    
    \put(6,29.5){$C$}
    \put(6,9.5){$X$}

   \put(3.5,40){$\mathbf{\hat{t}}$}

    \put(22.5,54){$\color{red}{e_x}$}
    \put(34,53.5){$\color{teal}{e_y}$}
    \put(26,61){$\color{blue}{e_z}$}
    
    \put(20.5,47.5){$\mathbf{r}$}
    \put(15.5,36){$\phi$}
    \put(12,40){$\theta$}
    
    \put(51,62){\rotatebox{-13}{\footnotesize{Width $L$}}}
    \put(29,28){\rotatebox{30}{\footnotesize\text{CPL}}}
    \put(17,29){\rotatebox{30}{\footnotesize\text{$ x_C$}}}
    \put(38,45.5){\rotatebox{-5}{\footnotesize\text{$ y_C$}}}
    \put(30,43.5){\rotatebox{0}{\footnotesize\text{$ z_C$}}}

%

    \end{overpic}
   \caption{Illustration of an elementary dipole pointing in an arbitrary direction $\mathbf{\hat{t}} = t_x {\bf \hat x}+t_y {\bf \hat y}+t_z {\bf \hat z} $. The centroid of the dipole has cartesian coordinates $(x_C, y_C, z_C)$. The CPL case in which the dipole is located in the central perpendicular line (CPL) and is vertically oriented is also reported.}\vspace{-0.4cm}
    \label{fig:holo-surface}
\end{figure}

\section{CRB computation  with an elementary source dipole}
\label{sec:est3D}
To simplify the analysis, we make the following assumption about the source, as shown in Fig.~\ref{fig:holo-surface}.
\textcolor{orange}{
\begin{assumption}
\label{assumption-3}
The source is an elementary dipole of length $l_s$ pointing in an arbitrary direction $\mathbf{\hat{t}} = t_x {\bf \hat x}+t_y {\bf \hat y}+t_z {\bf \hat z}$.
\end{assumption}
In the case of an elementary dipole, the current density ${\bf j}({\bf s})$ has the following expression (e.g.,~\cite[Sec. 2.3.1]{ChewBook})
\begin{equation}
\label{J_dipole}
{\bf j}({\bf s})= I_{in} l_s \delta({\bf s}) \mathbf{\hat{t}}
\end{equation}
where $I_{in}$ is the uniform current level in the dipole. Plugging~\eqref{J_dipole} into \eqref{RadVect} yields
\begin{align}
\label{RadVecDipole_1} {\bf R}(\theta,\phi)&=I_{in} l_s \mathbf{\hat{t}}
\end{align}
from which, using \eqref{E_P.0}, we have
\begin{align}
\label{E_P_PTAI_0}
{\bf \mathbf{e} }({\bf r}) = G(r)  I_{in} l_s [\left(\mathbf{\hat{r}} \mathbf{\times} \mathbf{\hat{t}}\right)\mathbf{\times} \mathbf{\hat{r}}].
\end{align}
By using the identity $\left(\mathbf{\hat{r}} \mathbf{\times} \mathbf{\hat{t}}\right)\mathbf{\times} \mathbf{\hat{r}}=\mathbf{\hat{t}}-(\mathbf{\hat{r}}\cdot \mathbf{\hat{t}}) \mathbf{\hat{r}}$, \eqref{E_P_PTAI_0} can be written in the equivalent form
\begin{align}
\label{E_P_PTAI}
{\bf \mathbf{e} }({\bf r}) = G(r)  I_{in} l_s [\mathbf{\hat{t}}-(\mathbf{\hat{r}}\cdot \mathbf{\hat{t}}) \mathbf{\hat{r}}].
\end{align}}
From~\eqref{GreenScalare2} and~\eqref{E_P_PTAI}, it thus follows that
\begin{align}
\label{CartesianComponents_PTAI_x}
e_x&= -\imagunit \chi \dfrac{\mathrm{e}^{-\imagunit kr}}{r} [t_x-(r_x t_x+r_y t_y+r_z t_z)r_x]\\\label{CartesianComponents_PTAI_y}
e_y&= -\imagunit \chi \dfrac{\mathrm{e}^{-\imagunit kr}}{r} [t_y-(r_x t_x+r_y t_y+r_z t_z)r_y]\\\label{CartesianComponents_PTAI_z}
e_z&= -\imagunit \chi \dfrac{\mathrm{e}^{-\imagunit kr}}{r} [t_z-(r_x t_x+r_y t_y+r_z t_z)r_z].
\end{align}
where $(r_x,r_y,r_z)$ are the cartesian components of $\mathbf{\hat{r}}$, and
\begin{align}\label{eq:chi}
\chi= \dfrac{Z_0 I_{in}}{2} \dfrac{l_s}{\lambda}
\end{align}
is measured in volts, $\mathrm V$. For the sake of simplicity, we have dropped the dependence of $e_x$, $e_y$ and $e_z$ on $\bf r$. The above expressions depend on $(t_x,t_y,t_z)$ and the cartesian coordinates $(x_C, y_C, z_C)$ of the centroid $C$, which are related to $(r_x,r_y,r_z)$ through the following equations:
\begin{align}
\label{rcomp}
r_x=-\dfrac{x_C}{r}, \quad r_y=\dfrac{y-y_C}{r}, \quad r_z=\dfrac{z-z_C}{r}
\end{align}
where $r$ is given by \eqref{SferCart1}. 

We aim at evaluating the CRBs for the estimation of $(x_C,y_C,z_C)$ on the basis of~\eqref{CartesianComponents_PTAI_x}--\eqref{CartesianComponents_PTAI_z}. 
We first consider the case in which the parameters $(t_x,t_y,t_z)$ are \textit{unknown} and thus must be considered as \textit{nuisance} parameters. This corresponds to the case of having \textit{partial} information about the source current distribution. This first case is not only more general but also instrumental to obtain the CRBs with \emph{full} information, i.e., the parameters $(t_x,t_y,t_z)$ are a priori \textit{known}.


\subsection{Unknown orientation of the dipole}
When $(t_x,t_y,t_z)$ are unknown, we cannot use \eqref{eq:CRBs} -- \eqref{fmn} but must compute the FIM for \textit{all} the unknown parameters, which are collected into the $6$-dimensional vector $\mathbf{p}=(t_x,t_y,t_z,x_C,y_C,z_C)$. Therefore, the FIM is a $6\times6$ hermitian matrix with entries given by {
\begin{align}
\notag
&\big[\mathbf{F}\big]_{mn}=\\&\dfrac{2}{\sigma^2}\re\left\{\iint_{-\frac{L}{2}}^{\frac{L}{2}} \left[\derpar{e_x}{p_m} \derpar{e_x^\ast}{p_n}+\derpar{e_y}{p_m} \derpar{e_y^\ast}{p_n}+\derpar{e_z}{p_m} \derpar{e_z^\ast}{p_n}\right]dy dz\right\}\label{eq:elemF_6D}
\end{align}}
where $p_m$ denotes the $m$th element of $\mathbf{p}$ and $\re\{\cdot\}$ is the real part of the enclosed quantity. The derivatives involved in $[\mathbf{F}]_{mn}$ are computed in Appendix A. The matrix $\mathbf{F}$ can be partitioned as~\cite{Kay1993a}
\begin{align}
\label{matF}
\mathbf{F}=\left[\begin{array}{c|c} \mathbf{F}_{tt} & \mathbf{F}_{tc} \\ \hline \mathbf{F}_{ct} & \mathbf{F}_{cc}\end{array}\right]
\end{align}
where the $3 \times 3$ blocks $\mathbf{F}_{tt}$ and $\mathbf{F}_{cc}$ contain the partial derivatives with respect to $(t_x,t_y,t_z)$ and $(x_C, y_C, z_C)$, respectively, while $\mathbf{F}_{tc}$ and $\mathbf{F}_{ct}$ contain the mixed derivatives. Since $\mathbf{F}$ is symmetric, we have $\mathbf{F}_{tt}=\mathbf{F}^T_{tt}$,  $\mathbf{F}_{cc}=\mathbf{F}_{cc}^T$ and $\mathbf{F}_{tc}=\mathbf{F}_{ct}^T$. Based on \eqref{matF} and well known formulas \blue{on the inverse of partitioned matrices~\cite[Sec. A1.1.3]{Kay1993a}}, we can immediately show that the CRBs for the estimation of $x_C$, $y_C$ and $z_C$, are given by the diagonal elements of the matrix $\left(\mathbf{F}_{cc}-\mathbf{F}_{tc}^T\mathbf{F}_{tt}^{-1}\mathbf{F}_{tc}\right)^{-1}$, i.e.,
\begin{align}
\label{CRB_x_u}
{\rm{CRB}}_{\rm{u}}(x_C)=\left[\left(\mathbf{F}_{cc}-\mathbf{F}_{tc}^T\mathbf{F}_{tt}^{-1}\mathbf{F}_{tc}\right)^{-1}\right]_{11}  \\ 
\label{CRB_y_u} {\rm{CRB}}_{\rm{u}}(y_C)=\left[\left(\mathbf{F}_{cc}-\mathbf{F}_{tc}^T\mathbf{F}_{tt}^{-1}\mathbf{F}_{tc}\right)^{-1}\right]_{22}  \\
\label{CRB_z_u}  {\rm{CRB}}_{\rm{u}}(z_C)=\left[\left(\mathbf{F}_{cc}-\mathbf{F}_{tc}^T\mathbf{F}_{tt}^{-1}\mathbf{F}_{tc}\right)^{-1}\right]_{33} 
\end{align}
where the subscript $_{\rm{u}}$ is used to stress that the above results refer to the case of \emph{unknown} dipole orientation.

\begin{remark}
{\color{orange} Notice that the electromagnetic model in~\eqref{CartesianComponents_PTAI_x}-~\eqref{CartesianComponents_PTAI_z} can also be used to compute the CRBs for the cartesian components $(t_x,t_y,t_z)$ of $\mathbf{\hat{t}}$. Evaluating these bounds is out of the scope of this work whose focus is the estimation of the source position. However, we point out that estimating the orientation may be useful in practice, e.g., for the deployment and orientation of receiving surfaces.}
\end{remark}

\subsection{Known orientation of dipole}
When the arbitrary parameters $(t_x,t_y,t_z)$ of the dipole are perfectly known, we have a complete description of the source current distribution. This means that the functions $R_{\theta}(\theta,\phi)$ and $R_{\phi}(\theta,\phi)$ are known, and we can compute the CRB following the general\footnote{Here, by \textit{general} we mean that it is valid irrespective of the particular current distribution, as long as the latter is known.} procedure outlined in Section \ref{CRB_full}. The Fisher's information matrix obtained in this way coincides with the matrix $\mathbf{F}_{cc}$ computed previously. It thus follows that the CRBs for the estimation of $x_C$, $y_C$ and $z_C$, can be found as the diagonal elements of the matrix $\mathbf{F}^{-1}_{cc}$, i.e.,
\begin{align}
\label{matFcc_CRBx}
{\rm{CRB}}(x_C)&=\left[{\bf F}^{-1}_{cc}\right]_{11}  \\ 
\label{matFcc_CRBy} {\rm{CRB}}(y_C)&=\left[{\bf F}^{-1}_{cc}\right]_{22}  \\ 
\label{matFcc_CRBz} {\rm{CRB}}(z_C)&=\left[{\bf F}^{-1}_{cc}\right]_{33}. 
\end{align}
\vspace{-1cm}
\textcolor{blue}{\subsection{Relationship between ${\rm{CRBs}}$ with and without knowledge of the dipole orientation}
By applying the matrix inversion lemma,\footnote{$\left(\mathbf{A}+\mathbf{U}\mathbf{C}\mathbf{V}\right)^{-1} = \mathbf{A}^{-1} - \mathbf{A}^{-1}\mathbf{U}(\mathbf{C}^{-1}+\mathbf{V}\mathbf{A}^{-1}\mathbf{U})^{-1}\mathbf{V}\mathbf{A}^{-1}$} we obtain
\begin{equation}
\label{MIL}
\begin{split}
&\left[\left(\mathbf{F}_{cc}-\mathbf{F}_{tc}^T\mathbf{F}_{tt}^{-1}\mathbf{F}_{tc}\right)^{-1}\right]_{ii}\\=&\left[\mathbf{F}_{cc}^{-1}\right]_{ii}+\left[\mathbf{F}_{cc}^{-1}\mathbf{F}_{tc}^T\left(\mathbf{F}_{tt}-\mathbf{F}_{tc}\mathbf{F}_{cc}^{-1}\mathbf{F}_{tc}^T\right)^{-1}\mathbf{F}_{tc}\mathbf{F}_{cc}^{-1}\right]_{ii}
\end{split}
\end{equation}
for $i=1,2,3$. Notice that $\left(\mathbf{F}_{tt}-\mathbf{F}_{tc}\mathbf{F}_{cc}^{-1}\mathbf{F}_{tc}^T\right)^{-1}$ is positive semi-definite, since it is the $3 \times 3$ block in the upper left corner of the semi-positive definite matrix ${\bf F}^{-1}$. As a result, the second matrix $\mathbf{F}_{cc}^{-1}\mathbf{F}_{tc}^T\left(\mathbf{F}_{tt}-\mathbf{F}_{tc}\mathbf{F}_{cc}^{-1}\mathbf{F}_{tc}^T\right)^{-1}\mathbf{F}_{tc}\mathbf{F}_{cc}^{-1}$ in~\eqref{MIL} is positive semi-definite as well. Hence, from \eqref{MIL} it follows that
\begin{equation}
\label{MILbis}
\begin{split}
\left[\mathbf{F}_{cc}^{-1}\right]_{ii} \le \left[\left(\mathbf{F}_{cc}-\mathbf{F}_{tc}^T\mathbf{F}_{tt}^{-1}\mathbf{F}_{tc}\right)^{-1}\right]_{ii}. 
\end{split}
\end{equation}
By exploiting this result, we have that
\begin{align}
\label{Inx}
{\rm{CRB}}(x_C) \le {\rm{CRB}}_{\rm{u}}(x_C) \\ 
\label{Iny}
{\rm{CRB}}(y_C) \le {\rm{CRB}}_{\rm{u}}(y_C) \\
\label{Inz} 
{\rm{CRB}}(z_C) \le {\rm{CRB}}_{\rm{u}}(z_C) 
\end{align}
as it should be since the Cram\'er-Rao bound worsens in the presence of unknown nuisance parameters~\cite[Example 3.7]{Kay1993a}. However, numerical results will show that the loss in estimation accuracy between the two cases in which the parameters $(t_x,t_y,t_z)$ are unknown and known may be negligible under certain conditions. More details will be given in Section~\ref{Sec:Numerical_analysis}.}


\section{A case study}\label{Sec:case-study}
The elements of the matrices needed for the computation of $\mathbf{F}$ in~\eqref{matF} can be obtained by numerically evaluating the integrals~\eqref{eq:elemF_6D} for any arbitrary position of the dipole. Although possible, this makes it hard to gain insights into the CRBs. Next, we show that closed-form expressions can be obtained when the dipole is located in the central perpendicular line (CPL) and is vertically oriented; see Fig.~\ref{fig:holo-surface}. The following assumption is thus made.
\begin{assumption}
\label{ass:cpl_vertical}
The center $C$ of the dipole is on the line perpendicular to  $\mathcal R_o$ passing through the point $O$ (known as CPL case), as shown in Fig.~\ref{fig:holo-surface}, and the dipole is vertically oriented. 
\end{assumption}
Under the CPL assumption, we have that $y_C=z_C=0$, whereas the vertical orientation of the dipole implies that $t_x=t_y=0$ and $t_z=1$.
For convenience, we call 
\begin{align}\mathrm{SNR} \triangleq \frac{2 |\chi|^2}{\sigma^2}
\end{align}
 the signal-to-noise ratio (SNR) with $\chi$ given by~\eqref{eq:chi} and define
\begin{align}\label{rho}
 \rho \triangleq \frac{L}{x_C}.
 \end{align}
The following result is obtained.

\begin{proposition}
\label{CPL_VD}
Under Assumption~\ref{ass:cpl_vertical}, we have that:
\begin{enumerate}
  \item The matrix $\mathbf{F}_{cc}$ becomes diagonal with entries
\begin{align}
 \label{eq:Fcc11_A4}
[\mathbf{F}_{cc}]_{11}&= \mathrm{SNR} \cdot \left[k^2 \mathscr{I}_1(\rho) +  x_C^{-2} \mathscr{I}_2(\rho)\right]\\ 
\label{eq:Fcc22_A4}
[\mathbf{F}_{cc}]_{22}&= \mathrm{SNR} \cdot \left[k^2 \mathscr{I}_3(\rho) +  x_C^{-2} \mathscr{I}_4(\rho)\right]\\ 
\label{eq:Fcc33_A4}
[\mathbf{F}_{cc}]_{33}&= \mathrm{SNR} \cdot \left[k^2 \mathscr{I}_5(\rho) +  x_C^{-2} \mathscr{I}_6(\rho)\right]
\end{align}
where 
$\mathscr{I}_i(\rho)$ for $i = 1, \ldots, 6$ are given in Appendix B.

\item The matrix $\mathbf{F}_{tt}$ becomes diagonal with entries
\begin{align}
\label{Ftt_11}
[\mathbf{F}_{tt}]_{11}&=\mathrm{SNR} \cdot \mathscr{I}_7(\rho)\\
\label{Ftt_22}
[\mathbf{F}_{tt}]_{22}&=[\mathbf{F}_{tt}]_{33}=\mathrm{SNR} \cdot  \mathscr{I}_8(\rho)
\end{align}
where $\mathscr{I}_i(\rho)$ for $i = 7, 8$ are given in \eqref{I7}--\eqref{I8} in Appendix B.

\item The elements of $\mathbf{F}_{tc}$ are all zero except for $[\mathbf{F}_{tc}]_{13}$ and $[\mathbf{F}_{tc}]_{31}$ which can be computed as 
\begin{align}
\label{Ftc13}
[\mathbf{F}_{tc}]_{13}&=x^{-1}_{C} \mathrm{SNR} \cdot \mathscr{I}_9(\rho)\\
\label{Ftc31}
[\mathbf{F}_{tc}]_{31}
&= -x^{-1}_{C} \mathrm{SNR} \cdot \mathscr{I}_{10}(\rho)
\end{align}
where $\mathscr{I}_{9}(\rho)$ and $\mathscr{I}_{10}(\rho)$ are given, respectively, by \eqref{I9} and \eqref{I10} in Appendix B.

\end{enumerate}

\end{proposition}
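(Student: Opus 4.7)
The strategy is to evaluate the partial derivatives appearing in \eqref{eq:elemF_6D} at the specific point defined by Assumption~\ref{ass:cpl_vertical}, then exploit the parity of the integrands over the square domain $\mathcal{R}_o$ to kill all off-diagonal entries, and finally rescale the integration variables to isolate the dependence on $\rho = L/x_C$.

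\textbf{Step 1 (specialization of the field components).} I would start from the closed-form expressions \eqref{CartesianComponents_PTAI_x}--\eqref{CartesianComponents_PTAI_z} and the substitution \eqref{rcomp}. Setting $y_C = z_C = 0$ and $(t_x,t_y,t_z) = (0,0,1)$ gives $r = \sqrt{x_C^2 + y^2 + z^2}$, $r_x = -x_C/r$, $r_y = y/r$, $r_z = z/r$, so that $e_x = \imagunit \chi (e^{-\imagunit kr}/r) r_x r_z$, $e_y = \imagunit \chi (e^{-\imagunit kr}/r) r_y r_z$, and $e_z = -\imagunit \chi (e^{-\imagunit kr}/r)(1-r_z^2)$. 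These will be the base expressions at which all derivatives are evaluated, although the derivatives themselves must of course be taken before the substitutions.

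\textbf{Step 2 (derivative computation).} Using the general expressions collected in Appendix~A, I would compute $\partial e_v/\partial p_m$ for $v \in \{x,y,z\}$ and $p_m \in \{t_x,t_y,t_z,x_C,y_C,z_C\}$, then evaluate at the CPL/vertical point. Two simplifications are central: (i) derivatives with respect to $t_x,t_y,t_z$ act only on the bracketed factor $\mathbf{\hat{t}} - (\mathbf{\hat{r}}\cdot\mathbf{\hat{t}})\mathbf{\hat{r}}$ in \eqref{E_P_PTAI}; (ii) derivatives with respect to $y_C$ and $z_C$ can be expressed via $\partial/\partial y_C = -\partial/\partial y$ and $\partial/\partial z_C = -\partial/\partial z$ applied to each component, which simplifies the bookkeeping of the chain rule through $(r,\theta,\phi)$.

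\textbf{Step 3 (symmetry argument for the off-diagonal entries).} The integration domain is the square $|y|,|z| \le L/2$, and after the substitutions above each integrand inherits a definite parity in $y$ and $z$ from the polynomial factors in $r_y = y/r$ and $r_z = z/r$ (the factors $r$ and $e^{-\imagunit kr}$ are even in both $y$ and $z$). A routine parity check on each of the $6\times 6$ integrands shows that all off-diagonal entries of $\mathbf{F}_{cc}$ and $\mathbf{F}_{tt}$ involve an integrand that is odd in $y$ or odd in $z$, and therefore vanishes; the only surviving cross-block entries are $[\mathbf{F}_{tc}]_{13}$ and $[\mathbf{F}_{tc}]_{31}$, which pair $t_x$ with $z_C$ (both produce $r_x r_z$-type factors with matching parity). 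This is the cleanest way to reduce the problem to six scalar integrals for $\mathbf{F}_{cc}$ and $\mathbf{F}_{tt}$, plus the $[\mathbf{F}_{tc}]_{13}$ entry.

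\textbf{Step 4 (rescaling and identification of $\mathscr{I}_i(\rho)$).} On each surviving integral I would change variables $u = y/x_C$, $v = z/x_C$, producing $dy\,dz = x_C^2 \,du\,dv$ and turning the domain into $|u|,|v| \le \rho/2$. The factor $1/r$ becomes $1/(x_C\sqrt{1+u^2+v^2})$, and the exponential $e^{-\imagunit k r}$ gives a modulus-one contribution once complex conjugation is taken; the real-part operation then eliminates any residual phase. Each integral factors cleanly into $k^2$ times a pure function of $\rho$ or $x_C^{-2}$ times a pure function of $\rho$; matching with the definitions of $\mathscr{I}_1$--$\mathscr{I}_{10}$ in Appendix~B yields \eqref{eq:Fcc11_A4}--\eqref{Ftc31}. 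The overall prefactor $2|\chi|^2/\sigma^2$ is exactly $\mathrm{SNR}$ as defined. The main obstacle is purely bookkeeping: organizing the nine scalar derivatives per component so that the parity argument in Step~3 can be read off without error, and keeping track of the two distinct scalings $k^2$ and $x_C^{-2}$ that arise respectively from derivatives of $e^{-\imagunit kr}$ and from derivatives of the geometric factors $r_x, r_y, r_z$.
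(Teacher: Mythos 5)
Your proposal follows essentially the same route as the paper's Appendix B: evaluate the Appendix~A derivatives at $y_C=z_C=0$, $(t_x,t_y,t_z)=(0,0,1)$, kill the off-diagonal entries by parity of the integrands over the symmetric square domain, and rescale via $u=y/x_C$, $v=z/x_C$ to expose the $\mathscr{I}_i(\rho)$ and the $\mathrm{SNR}=2|\chi|^2/\sigma^2$ prefactor, with the $k^2$ and $x_C^{-2}$ scalings separated by the real-part operation. The only slip is descriptive: $[\mathbf{F}_{tc}]_{31}$ pairs $t_z$ with $x_C$ (not $t_x$ with $z_C$), though your parity conclusion for it is still correct.
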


\begin{proof}
The proof is given in Appendix B.
\end{proof}

Based on the results in Proposition \ref{CPL_VD}, the following corollaries are obtained for both cases with unknown and known dipole orientation. Notice that the dependence on $\rho$ is omitted to simplify the notation.
\begin{corollary}[Unknown dipole orientation]\label{Corollary1}Under Assumption~\ref{ass:cpl_vertical}, the CRBs for the estimation of $x_C$, $y_C$ and $z_C$ when the dipole orientation is unknown are given by
\begin{align}
\label{eq:crb_ori_x}
\mathrm{CRB}_{\rm{u}}(x_C)&=\dfrac{\mathrm{SNR}^{-1}}{k^2 \mathscr{I}_1 +  x_C^{-2} (\mathscr{I}_2-\mathscr{I}_8^{-1}\mathscr{I}_{10}^2)}\\
\label{eq:crb_ori_y}
\mathrm{CRB}_{\rm{u}}(y_C)&=\dfrac{\mathrm{SNR}^{-1}}{k^2\mathscr{I}_3 + x_C^{-2}\mathscr{I}_4}\\
\label{eq:crb_ori_z}
\mathrm{CRB}_{\rm{u}}(z_C)&=\dfrac{\mathrm{SNR}^{-1}}{k^2\mathscr{I}_5 +  x_C^{-2}(\mathscr{I}_6-\mathscr{I}_7^{-1}\mathscr{I}_{9}^2)}.
\end{align}
\end{corollary}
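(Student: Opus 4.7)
The plan is to leverage the highly structured form of the partitioned Fisher matrix established in Proposition~\ref{CPL_VD} and reduce the Schur-complement in~\eqref{CRB_x_u}--\eqref{CRB_z_u} to a diagonal matrix that can be inverted entrywise. First, I would write $\mathbf{F}_{tc}$ explicitly as a $3 \times 3$ matrix whose only nonzero entries are $[\mathbf{F}_{tc}]_{13}$ and $[\mathbf{F}_{tc}]_{31}$, coupling $t_x \leftrightarrow z_C$ and $t_z \leftrightarrow x_C$ respectively. This sparsity, combined with the fact that $\mathbf{F}_{tt}=\mathrm{diag}([\mathbf{F}_{tt}]_{11},[\mathbf{F}_{tt}]_{22},[\mathbf{F}_{tt}]_{33})$, is the key structural input from Proposition~\ref{CPL_VD}.

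The core algebraic step is to compute the Schur complement correction $\mathbf{F}_{tc}^T\mathbf{F}_{tt}^{-1}\mathbf{F}_{tc}$ directly. Because only two entries of $\mathbf{F}_{tc}$ are nonzero, a direct multiplication shows that this product is diagonal:
\begin{equation*}
\mathbf{F}_{tc}^T\mathbf{F}_{tt}^{-1}\mathbf{F}_{tc}=\mathrm{diag}\!\left(\frac{[\mathbf{F}_{tc}]_{31}^2}{[\mathbf{F}_{tt}]_{33}},\,0,\,\frac{[\mathbf{F}_{tc}]_{13}^2}{[\mathbf{F}_{tt}]_{11}}\right).
\end{equation*}
Since $\mathbf{F}_{cc}$ is itself diagonal by Proposition~\ref{CPL_VD}, the difference $\mathbf{F}_{cc}-\mathbf{F}_{tc}^T\mathbf{F}_{tt}^{-1}\mathbf{F}_{tc}$ is diagonal and its inverse is obtained simply by reciprocating the three diagonal entries. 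The second coordinate ($y_C$) is completely decoupled from the nuisance parameters because row~2 and column~2 of $\mathbf{F}_{tc}$ are zero, which immediately yields $\mathrm{CRB}_{\rm{u}}(y_C)=1/[\mathbf{F}_{cc}]_{22}$.

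The final step is to substitute the closed-form entries given in Proposition~\ref{CPL_VD}, namely $[\mathbf{F}_{cc}]_{ii}=\mathrm{SNR}\cdot(k^2\mathscr{I}_{2i-1}+x_C^{-2}\mathscr{I}_{2i})$, $[\mathbf{F}_{tt}]_{11}=\mathrm{SNR}\cdot\mathscr{I}_7$, $[\mathbf{F}_{tt}]_{33}=\mathrm{SNR}\cdot\mathscr{I}_8$, together with $[\mathbf{F}_{tc}]_{13}=x_C^{-1}\mathrm{SNR}\cdot\mathscr{I}_9$ and $[\mathbf{F}_{tc}]_{31}=-x_C^{-1}\mathrm{SNR}\cdot\mathscr{I}_{10}$. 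The factor $\mathrm{SNR}^2$ in the numerator of each Schur correction cancels one factor of $\mathrm{SNR}$ from $\mathbf{F}_{tt}^{-1}$, leaving a single $\mathrm{SNR}$ in each diagonal entry that can be factored out. After collecting terms, the denominators of~\eqref{eq:crb_ori_x}--\eqref{eq:crb_ori_z} emerge immediately.

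There is no real obstacle here: the heavy lifting has already been done in Proposition~\ref{CPL_VD}, and the remaining argument is a short linear-algebra verification enabled by the sparsity pattern of $\mathbf{F}_{tc}$. The only point requiring a sentence of care is to remark that Proposition~\ref{CPL_VD} guarantees $\mathbf{F}_{tc}$ has exactly this sparse form under the CPL/vertical-orientation assumption (so that Schur's complement is diagonal), whereas for an arbitrary position or orientation the correction would generally be a full $3\times 3$ matrix and no such closed form would exist.
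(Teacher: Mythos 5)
Your proposal is correct and follows essentially the same route as the paper: the paper's proof of Corollary~1 likewise reduces to noting that Proposition~1 implies $\mathbf{F}_{tc}^T\mathbf{F}_{tt}^{-1}\mathbf{F}_{tc} = \mathrm{diag}\bigl([\mathbf{F}_{tc}]_{31}^2/[\mathbf{F}_{tt}]_{33},\,0,\,[\mathbf{F}_{tc}]_{13}^2/[\mathbf{F}_{tt}]_{11}\bigr)$, so the Schur complement is diagonal and inverts entrywise. Your substitution of the closed-form entries and the cancellation of the $\mathrm{SNR}$ factors are exactly the remaining verification the paper leaves implicit.
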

\begin{proof}
The proof follows from \eqref{CRB_x_u} -- \eqref{CRB_z_u} by using the results of Proposition~\ref{CPL_VD} from which we have 
$\mathbf{F}_{tc}^T\mathbf{F}_{tt}^{-1}\mathbf{F}_{tc} = \diag\left([\mathbf{F}_{tc}]_{31}^2/{[\mathbf{F}_{tt}]_{33}},0, [\mathbf{F}_{tc}]_{13}^2/{[\mathbf{F}_{tt}]_{11}}\right)$.
\end{proof}

\begin{corollary}[Known dipole orientation]
Under Assumption~\ref{ass:cpl_vertical}, the CRBs for the estimation of $x_C$, $y_C$ and $z_C$ when the dipole orientation is known are:
\begin{align}
\label{eq:crb_x}
\mathrm{CRB}(x_C)&= \dfrac{\mathrm{SNR}^{-1}}{k^2 \mathscr{I}_1 +  x_C^{-2} \mathscr{I}_2}\\ \label{eq:crb_y}
\mathrm{CRB}(y_C)&=\dfrac{\mathrm{SNR}^{-1}}{k^2\mathscr{I}_3 + x_C^{-2}\mathscr{I}_4}\\ \label{eq:crb_z}
\mathrm{CRB}(z_C)&= \dfrac{\mathrm{SNR}^{-1}}{k^2\mathscr{I}_5 +  x_C^{-2}\mathscr{I}_6}.	
\end{align}
\end{corollary}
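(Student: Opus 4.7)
The proof is essentially a direct specialization of the matrix equations \eqref{matFcc_CRBx}--\eqref{matFcc_CRBz} using the explicit structure of $\mathbf{F}_{cc}$ established in Proposition~\ref{CPL_VD}. My plan is simply to observe that once the dipole orientation is known, the CRBs are read off the diagonal of $\mathbf{F}_{cc}^{-1}$, and that under the CPL/vertical‑orientation assumption this inversion collapses to taking reciprocals.

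First, I would recall from \eqref{matFcc_CRBx}--\eqref{matFcc_CRBz} that
\begin{equation*}
\mathrm{CRB}(x_C)=[\mathbf{F}_{cc}^{-1}]_{11},\quad \mathrm{CRB}(y_C)=[\mathbf{F}_{cc}^{-1}]_{22},\quad \mathrm{CRB}(z_C)=[\mathbf{F}_{cc}^{-1}]_{33},
\end{equation*}
so that the statement only requires the three diagonal entries of $\mathbf{F}_{cc}^{-1}$. Then I would invoke Proposition~\ref{CPL_VD}(1), which asserts that under Assumption~\ref{ass:cpl_vertical} the matrix $\mathbf{F}_{cc}$ is \emph{diagonal}, with diagonal entries given explicitly in \eqref{eq:Fcc11_A4}--\eqref{eq:Fcc33_A4}.

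Since the inverse of a diagonal matrix is the diagonal matrix of reciprocals, it follows immediately that $[\mathbf{F}_{cc}^{-1}]_{ii}=1/[\mathbf{F}_{cc}]_{ii}$ for $i=1,2,3$. Substituting the expressions from Proposition~\ref{CPL_VD}(1) and factoring $\mathrm{SNR}^{-1}$ out of the denominators yields the three formulas \eqref{eq:crb_x}--\eqref{eq:crb_z} verbatim.

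There is essentially no obstacle: all of the analytical work (changing variables to $\rho=L/x_C$, evaluating the integrals $\mathscr{I}_1,\ldots,\mathscr{I}_6$, and verifying that the off-diagonal entries of $\mathbf{F}_{cc}$ vanish under the CPL/vertical-orientation symmetry) has already been carried out in the proof of Proposition~\ref{CPL_VD}. A useful consistency check, which I would include as a brief remark, is to compare the present result with Corollary~\ref{Corollary1}: since $\mathscr{I}_7,\mathscr{I}_8>0$ and $\mathscr{I}_9^2,\mathscr{I}_{10}^2\ge 0$, the denominators in \eqref{eq:crb_x}--\eqref{eq:crb_z} are at least as large as those in \eqref{eq:crb_ori_x}--\eqref{eq:crb_ori_z}, recovering the expected inequalities \eqref{Inx}--\eqref{Inz} and confirming that knowledge of the orientation can only reduce the bound.
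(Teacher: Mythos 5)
Your proof is correct and follows exactly the route the paper takes: Corollary 2 is an immediate consequence of Proposition~\ref{CPL_VD}, since the diagonality of $\mathbf{F}_{cc}$ under Assumption~\ref{ass:cpl_vertical} reduces \eqref{matFcc_CRBx}--\eqref{matFcc_CRBz} to taking reciprocals of \eqref{eq:Fcc11_A4}--\eqref{eq:Fcc33_A4}. Your added consistency check against Corollary~\ref{Corollary1} is a nice touch but not part of the paper's (one-line) argument.
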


\begin{proof}
It easily follows from Proposition~\ref{CPL_VD} by using \eqref{matFcc_CRBx}--\eqref{matFcc_CRBz}.
\end{proof}

The above corollaries clearly show the effects of the wavelength $\lambda=2 \pi /k$ and $x_C$ for fixed values of $\rho$ or, equivalently, of the functions $\{\mathscr{I}_i\}$. Particularly, we see that the estimation accuracy increases as $\lambda$ or $x_C$ decrease for fixed values of $\mathrm{SNR}$. Similar results were already observed in \cite{Hu2018b}. Also, from \eqref{eq:crb_ori_y} and \eqref{eq:crb_y} it follows that, under Assumption~\ref{ass:cpl_vertical}, $\mathrm{CRB}_{\rm{u}}(y_C)=\mathrm{CRB}(y_C)$. 

 \subsection{Analysis for $x_C\gg\lambda$}

Assumption~\ref{ass:cpl_vertical} leads to a considerable simplification since the matrices $\mathbf{F}_{cc}$ and $\mathbf{F}_{tc}^T\mathbf{F}_{tt}^{-1}\mathbf{F}_{tc}$ become diagonal. Further simplifications can be obtained when $x_C\gg\lambda$. This condition is always satisfied in systems operating at frequencies in the range of GHz or above. 
\begin{proposition} 
\label{prop:xcgglambda}
Under Assumption~\ref{ass:cpl_vertical} and $x_C\gg\lambda$, CRBs reduce to
\begin{align}
\label{eq:crb_ori_x_app1}
\mathrm{CRB}_{\rm{u}}(x_C) \approx \mathrm{CRB}(x_C) &\approx \dfrac{\mathrm{SNR}^{-1}}{\mathscr{I}_1} \cdot \dfrac{\lambda^2}{4 \pi^2}\\
\label{eq:crb_ori_y_app1}
 \mathrm{CRB}_{\rm{u}}(y_C)= \mathrm{CRB}(y_C)&\approx \dfrac{\mathrm{SNR}^{-1}}{\mathscr{I}_3} \cdot \dfrac{\lambda^2}{4 \pi^2}
\end{align} 
\end{proposition}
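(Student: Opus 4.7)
The plan is to derive the approximations directly from the exact CRB formulas given in Corollaries 1 and 2, exploiting the scale separation between $k^2$ and $x_C^{-2}$ that the assumption $x_C \gg \lambda$ induces.

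First I would recall that $k = 2\pi/\lambda$, so $k^2 = 4\pi^2/\lambda^2$, and observe that
\begin{equation}
\frac{k^2}{x_C^{-2}} = \left(\frac{2\pi x_C}{\lambda}\right)^2 \gg 1
\end{equation}
under the assumption $x_C \gg \lambda$. The key observation is that all the functions $\mathscr{I}_i(\rho)$ depend only on the geometric ratio $\rho = L/x_C$ and are, by construction (as bounded integrals over a finite domain, per Appendix B), of order $O(1)$ in $\rho$. Consequently, in each denominator of Corollaries 1 and 2, the term $k^2 \mathscr{I}_i$ dominates the companion term $x_C^{-2} \mathscr{I}_j$ (and, in the unknown-orientation case, $x_C^{-2}(\mathscr{I}_2 - \mathscr{I}_8^{-1}\mathscr{I}_{10}^2)$), provided $\mathscr{I}_1, \mathscr{I}_3$ do not degenerate to zero — which can be checked from their explicit forms.

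Starting from \eqref{eq:crb_x} and neglecting the subdominant $x_C^{-2}\mathscr{I}_2$ term, I would write
\begin{equation}
\mathrm{CRB}(x_C) \approx \frac{\mathrm{SNR}^{-1}}{k^2 \mathscr{I}_1} = \frac{\mathrm{SNR}^{-1}}{\mathscr{I}_1}\cdot \frac{\lambda^2}{4\pi^2},
\end{equation}
and analogously from \eqref{eq:crb_y}
\begin{equation}
\mathrm{CRB}(y_C) \approx \frac{\mathrm{SNR}^{-1}}{\mathscr{I}_3}\cdot \frac{\lambda^2}{4\pi^2}.
\end{equation}
For the unknown-orientation case, the same argument applied to \eqref{eq:crb_ori_x} and \eqref{eq:crb_ori_y} yields the same leading-order expressions: the difference between the two scenarios lives entirely in the $x_C^{-2}$ contribution (namely the subtraction of $\mathscr{I}_8^{-1}\mathscr{I}_{10}^2$ in the $x_C$ bound), which is by the same scale argument negligible relative to $k^2 \mathscr{I}_1$. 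This explains the approximate equality $\mathrm{CRB}_{\rm{u}}(x_C) \approx \mathrm{CRB}(x_C)$, while $\mathrm{CRB}_{\rm{u}}(y_C) = \mathrm{CRB}(y_C)$ holds exactly already from Corollary 1, as was remarked just after it.

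The only delicate point I anticipate is rigorously justifying that the correction $x_C^{-2}(\mathscr{I}_2 - \mathscr{I}_8^{-1}\mathscr{I}_{10}^2)$ is not only $O(x_C^{-2})$ but also bounded in $\rho$ (so that it cannot secretly become of order $k^2$ when $\rho$ is large); this requires checking that $\mathscr{I}_8(\rho)$ stays bounded away from zero and that $\mathscr{I}_{10}(\rho)$ does not grow faster than $\sqrt{\mathscr{I}_8(\rho)}$, which should follow from the explicit integral expressions collected in Appendix B. Once this scale-separation bookkeeping is in place, the approximations reduce to inspection of the dominant term in each denominator, and the $\lambda^2/(4\pi^2)$ scaling is immediate from substituting $k^2 = 4\pi^2/\lambda^2$.
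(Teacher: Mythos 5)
Your overall strategy --- exploit $k^2/x_C^{-2}=(2\pi x_C/\lambda)^2\gg 1$ and discard the $x_C^{-2}$ contributions in the denominators of Corollaries 1 and 2 --- is exactly the paper's. However, the justification you give for it contains a genuine error: the claim that ``all the functions $\mathscr{I}_i(\rho)$ are, by construction (as bounded integrals over a finite domain), of order $O(1)$ in $\rho$'' is false. The integration domain is $[-\rho/2,\rho/2]^2$, which grows with $\rho$, and several of these integrals diverge as $\rho\to\infty$: the paper's own Appendix D shows $\mathscr{I}_3\sim\tfrac{3\pi}{4}\ln\rho$ and $\lim_{\rho\to\infty}\mathscr{I}_7=\infty$ (and $\mathscr{I}_5$, $\mathscr{I}_8$ likewise diverge logarithmically). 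So ``$O(1)$ by construction'' cannot be the reason the approximation holds, and for a fixed $\rho$ the statement is trivially true but says nothing about uniformity in $\rho$, which is precisely the issue you flag at the end and then leave unresolved. What actually makes the argument work is that the specific integrals multiplying $x_C^{-2}$ are controlled \emph{relative to} the ones multiplying $k^2$, and the paper proves this by pointwise comparison of integrands: writing $\mathscr{I}_1=\iint (1+v^2)/(1+u^2+v^2)^3\,du\,dv$ and noting $(1+v^2)(1+u^2+v^2)\ge 1+u^2v^2+v^4$ gives $\mathscr{I}_2<\mathscr{I}_1$, hence $x_C^{-2}\mathscr{I}_2< x_C^{-2}\mathscr{I}_1\ll k^2\mathscr{I}_1$ uniformly in $\rho$; similarly $0<\mathscr{I}_4<2\,\mathscr{I}_3$ handles the $y_C$ bound.

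For the unknown-orientation correction, your proposed route (bounding $\mathscr{I}_8$ away from zero and controlling the growth of $\mathscr{I}_{10}$) is unnecessarily hard; the paper closes it in one line by observing that $\mathscr{I}_2-\mathscr{I}_8^{-1}\mathscr{I}_{10}^2$ is necessarily non-negative (it sits, up to positive factors, inside the Schur complement $\mathbf{F}_{cc}-\mathbf{F}_{tc}^T\mathbf{F}_{tt}^{-1}\mathbf{F}_{tc}$, which must be positive semi-definite) and is strictly less than $\mathscr{I}_2$ since $\mathscr{I}_8^{-1}\mathscr{I}_{10}^2>0$; the sandwich $0\le \mathscr{I}_2-\mathscr{I}_8^{-1}\mathscr{I}_{10}^2<\mathscr{I}_2<\mathscr{I}_1$ then reduces the unknown-orientation case to the known-orientation one. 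In short: right idea, but the central dominance claim is asserted on incorrect grounds, and the ``delicate point'' you correctly identified is exactly the part that must be, and in the paper is, proved.
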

\begin{proof}
See Appendix C.
\end{proof}

Proposition \ref{prop:xcgglambda} shows that, when $x_C \gg \lambda$, the accuracy in the estimation of $x_C$ and $y_C$ solely depends on the values of $\lambda$ and $\rho$. In particular, keeping $\rho$ fixed, the CRBs for $x_C$ and $y_C$ scale with the square of $\lambda$. On the other hand, for a given value of $\lambda$, if $x_C$ increases by a factor $\alpha$, we must scale $L$ by the same factor in order to keep $\rho$, and hence the estimation accuracy, unchanged. This means that the area of the observation region must be increased by a factor $\alpha^2$. The same conclusions do not hold for the estimation of $z_C$. Indeed, under Assumption~\ref{ass:cpl_vertical} and $x_C\gg\lambda$ it can be shown that the terms $x_C^{-2}(\mathscr{I}_6-\mathscr{I}_7^{-1}\mathscr{I}_{9}^2)$ and $x_C^{-2}\mathscr{I}_6$, appearing in the denominator of the expressions for $\mathrm{CRB}_u(z_C)$ and $\mathrm{CRB}(z_C)$ in Corollaries 1 and 2, are not negligible with respect to $k^2\mathscr{I}_5$. Hence, the CRBs cannot be simplified and continue to depend on both $x_C$ and $\lambda$.

\subsection{Asymptotic analysis for $\rho \to \infty$}

Starting from the results given above, \blue{in order to understand the ultimate performance and to obtain insightful closed form expressions} it is useful to analyze the behaviour of the CRBs in the asymptotic regime $\rho = L/x_C\to \infty$. The main results are summarized in the following proposition.

\begin{proposition}
\label{rho_inf}
Under Assumption~\ref{ass:cpl_vertical} and $x_C \gg \lambda$, in the asymptotic regime $\rho \to \infty$ we have
\begin{align}
\label{eq:CRBXtoINF}
\!\!\!\!\lim\limits_{\rho \to \infty} \mathrm{CRB}_{\rm{u}}(x_C)=\lim\limits_{\rho \to \infty} \mathrm{CRB}(x_C)&=\dfrac{\mathrm{SNR}^{-1}} {3 \pi^3} \lambda^2\\
\label{eq:CRBYtoINF}
\!\!\!\!\lim\limits_{\rho \to \infty} \mathrm{CRB}_{\rm{u}}(y_C){\ln \rho}=\lim\limits_{\rho \to \infty}\mathrm{CRB}(y_C){\ln \rho}& = \dfrac{\mathrm{SNR}^{-1}} {3 \pi^3} \lambda^2\\ \label{eq:CRBZtoINF}
\!\!\!\!\lim\limits_{\rho \to \infty} \mathrm{CRB}_{\rm{u}}(z_C){\ln \rho}=\lim\limits_{\rho \to \infty} \mathrm{CRB}(z_C){\ln \rho}&  = \dfrac{\mathrm{SNR}^{-1}} { \pi^3} \lambda^2.\!\!
\end{align}
\end{proposition}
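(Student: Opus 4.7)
The plan is to deduce all three limits from the closed-form expressions of Corollaries 1 and 2 together with the simplification of Proposition \ref{prop:xcgglambda}, by first reducing each CRB to a single dominant term of the form $\mathrm{SNR}^{-1}\lambda^2/(4\pi^2 \mathscr{I}_j(\rho))$ in the combined regime, and then extracting the leading asymptotic behaviour of $\mathscr{I}_1$, $\mathscr{I}_3$, and $\mathscr{I}_5$ as $\rho\to\infty$.

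For $x_C$ and $y_C$ the reduction is already provided by Proposition \ref{prop:xcgglambda}. For $z_C$ the analogous reduction is missing, so my first step would be to establish that, in the combined regime $x_C\gg\lambda$ and $\rho\to\infty$, the corrections $x_C^{-2}\mathscr{I}_6(\rho)$ and $x_C^{-2}\mathscr{I}_7^{-1}(\rho)\mathscr{I}_9^2(\rho)$ appearing in the denominators of \eqref{eq:crb_ori_z} and \eqref{eq:crb_z} are negligible with respect to $k^2\mathscr{I}_5(\rho)$. I would do this by proving that $\mathscr{I}_5(\rho)$ diverges logarithmically in $\rho$ while $\mathscr{I}_6$ and $\mathscr{I}_7^{-1}\mathscr{I}_9^2$ grow no faster, then using $k^2 x_C^2=4\pi^2 x_C^2/\lambda^2\gg 1$ to absorb the $x_C^{-2}$ factor. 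This yields $\mathrm{CRB}_{\rm{u}}(z_C)\approx\mathrm{CRB}(z_C)\approx \mathrm{SNR}^{-1}\lambda^2/(4\pi^2\mathscr{I}_5(\rho))$ and already explains why the known- and unknown-orientation bounds coincide asymptotically for $z_C$ (the analogous coincidence for $y_C$ being exact by Corollary 1 and for $x_C$ an immediate consequence of a parallel argument).

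The core analytical task is then the asymptotic evaluation of $\mathscr{I}_1$, $\mathscr{I}_3$, $\mathscr{I}_5$. Introducing the normalized variables $u=y/x_C$ and $v=z/x_C$ recasts each $\mathscr{I}_j(\rho)$ as a dimensionless double integral of an $(u,v)$-integrand over the square $[-\rho/2,\rho/2]^2$. Passing to polar coordinates and inspecting the large-radius behaviour of the integrands splits the analysis into two cases. For $\mathscr{I}_1$, I expect the integrand to decay like the inverse cube of the radius, so the improper integral over $\mathbb{R}^2$ converges and $\lim_{\rho\to\infty}\mathscr{I}_1(\rho)$ equals this convergent integral; elementary radial and angular integrations should give the value $3\pi/4$, which substituted into the reduced CRB produces \eqref{eq:CRBXtoINF}. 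For $\mathscr{I}_3$ and $\mathscr{I}_5$, I expect the integrands to decay only like the inverse square of the radius, giving a logarithmic divergence. Splitting the domain into a bounded disk plus an annular tail should produce $\mathscr{I}_j(\rho)=C_j\ln\rho + O(1)$, where $C_j$ is the angular average of the leading tail density; extracting $C_3=3\pi/4$ and $C_5=\pi/4$ and substituting into the reduced CRBs produces \eqref{eq:CRBYtoINF} and \eqref{eq:CRBZtoINF}.

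The hardest part will be the preliminary reduction for $z_C$ combined with the tail analysis of $\mathscr{I}_5$, $\mathscr{I}_6$, $\mathscr{I}_7$, and $\mathscr{I}_9$: one needs sharp two-sided control of the leading-order tails, not merely upper bounds, both to certify the logarithmic divergence rate with the correct prefactor and to confirm that the nuisance-parameter correction $\mathscr{I}_7^{-1}\mathscr{I}_9^2$ is strictly subleading. Everything else should reduce to elementary polar-coordinate integration once the Appendix B expressions for the $\mathscr{I}_j$ are at hand.
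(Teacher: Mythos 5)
Your proposal is correct and follows essentially the same route as the paper's Appendix D: reduce each bound to $\mathrm{SNR}^{-1}\lambda^2/(4\pi^2\mathscr{I}_j)$, show the nuisance/correction terms $x_C^{-2}\mathscr{I}_6$ and $x_C^{-2}\mathscr{I}_7^{-1}\mathscr{I}_9^2$ are subleading because $\mathscr{I}_5$ diverges while $\mathscr{I}_6\to 9\pi/8$, $\mathscr{I}_9\to\pi/2$ and $\mathscr{I}_7\to\infty$, and then extract $\mathscr{I}_1\to 3\pi/4$, $\mathscr{I}_3\sim\tfrac{3\pi}{4}\ln\rho$, $\mathscr{I}_5\sim\tfrac{\pi}{4}\ln\rho$. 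The only cosmetic difference is that the paper implements your polar-coordinate tail analysis as a squeeze between closed-form integrals over the inscribed and circumscribed disks (and reads off $\mathscr{I}_1,\mathscr{I}_6,\mathscr{I}_9$ from their closed forms), which delivers exactly the two-sided control you flag as the hard step.
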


\begin{proof}
See Appendix D.
\end{proof}

Proposition~\ref{rho_inf} shows that, for sufficiently large values of $\rho = L/x_C$, the estimation accuracy is the same in both cases of unknown or known dipole orientation. This means that, even though $\mathbf{\hat{t}}$ is unknown, we can achieve the same accuracy in the estimation of the source position as in the case of known dipole orientation. Clearly, this requires in general the \textit{joint} estimation of $\mathbf{\hat{t}}$ and $(x_C,y_C,z_C)$. {\color{orange}It is interesting to note that the CRBs for the estimation of $y_C$ and $z_C$ goes to zero as $\rho$ increases unboundedly. This is in contrast to the results in~\cite[Eq. (26)]{Hu2018b} where it is shown that the asymptotic CRBs are identical for all the three dimensions and depend solely on the wavelength $\lambda$. This difference is a direct consequence of the different radiation and signal models used for the computation of CRBs. Indeed, in \cite{Hu2018b} the bounds are derived on the basis of the scalar field \eqref{Lund1}.}




\begin{figure}[t!]
        \centering
	\begin{subfigure}[t]{1\columnwidth} \centering 
	\begin{overpic}[width=\columnwidth]{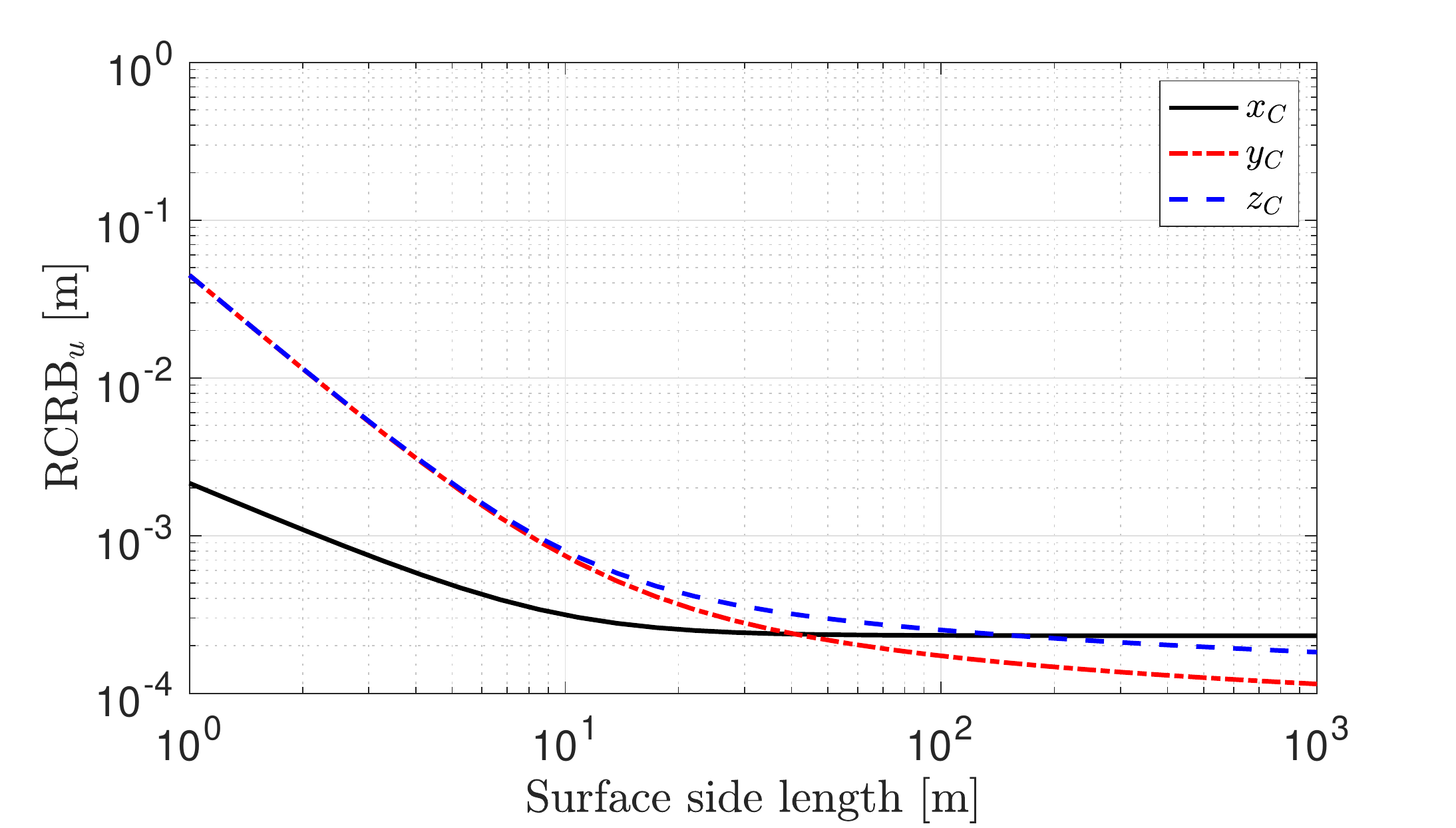}
	\put(17,11){\scriptsize{Asymptotic limit, $L \to \infty$}}
	\end{overpic} 	

	\caption{Unknown dipole orientation}\vspace{0.2cm}
	\label{fig:unk001}
\end{subfigure}
\begin{subfigure}[t]{1\columnwidth} \centering 
	\begin{overpic}[width=\columnwidth]{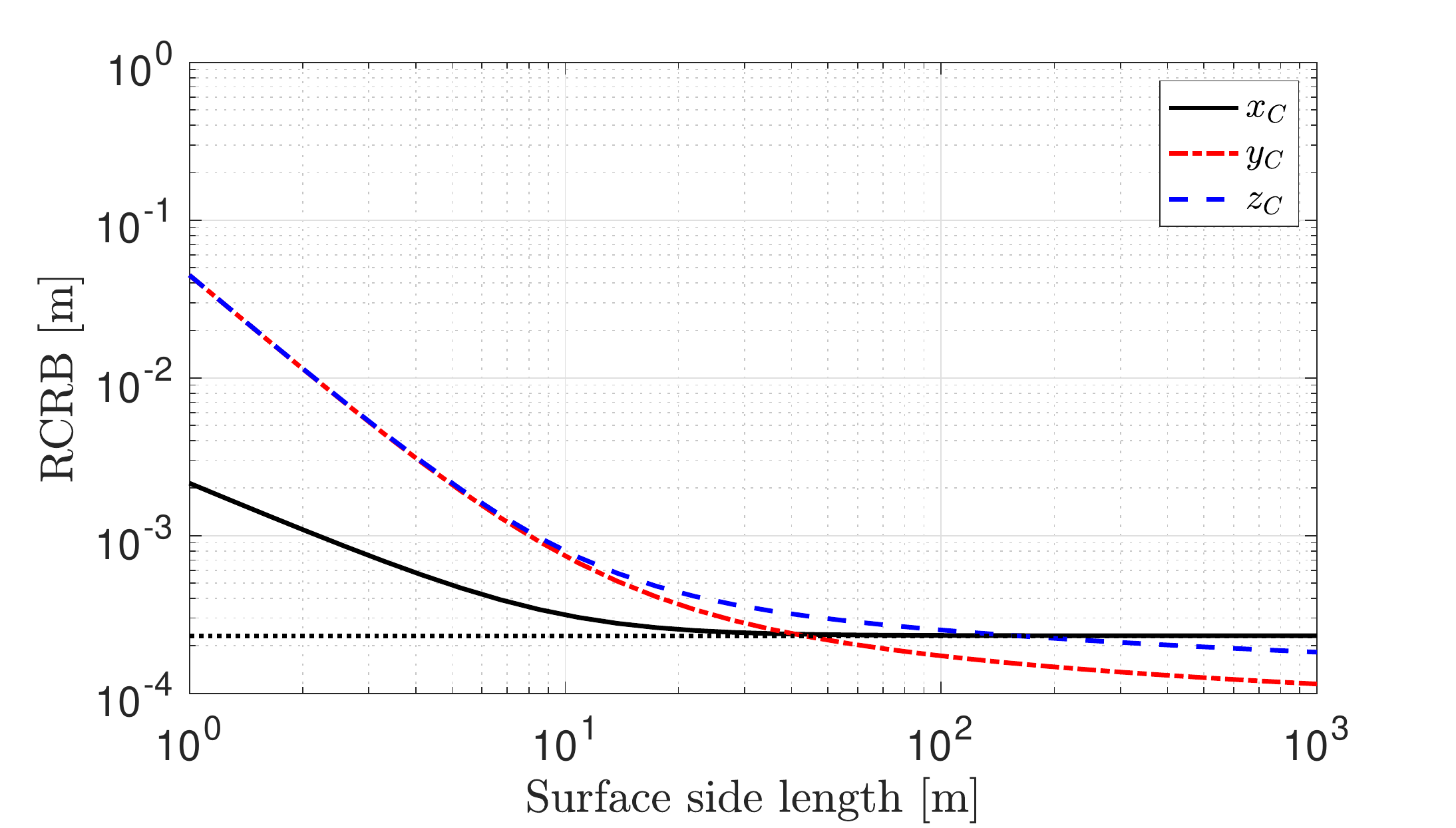}
	\put(17,10.7){\scriptsize{Asymptotic limit, $L \to \infty$}}
	\end{overpic} 	
	\caption{Known dipole orientation}\vspace{0.2cm}
	\label{fig:k001}
\end{subfigure}

\caption{RCRBs as a function of the surface side length for a vertically oriented dipole in CPL at a distance $x_C=6$ m.}
\label{fig:001_vs_size}
\end{figure}

\section{Numerical Analysis of Cram{\'e}r-Rao Bounds}\label{Sec:Numerical_analysis}

The effect of the various system parameters on the estimation accuracy are now analyzed by numerically evaluating the CRBs according to the general expressions given in \eqref{CRB_x_u}--\eqref{CRB_z_u} or in \eqref{matFcc_CRBx}--\eqref{matFcc_CRBz} for the unknown or known dipole orientation cases, respectively. We assume that the dipole is located at a distance $x_C = 6$\,m, the signal-to-noise ratio is $\mathrm{SNR}=|\chi|^2/\sigma^2=10$\,dB, and the wavelength is $\lambda =0.01$\,m (corresponding to $f_c=30$\,GHz), unless otherwise specified. {\color{orange} When needed to validate the asymptotic analysis, numerical results are given for surfaces of extremely large side length, e.g., up to $L = 10^3$\,m. Clearly, this does not mean that we advocate the use of such \emph{practically infinite} surfaces. Indeed, most of the conclusions and insights will be given for values in the range $1 \, {\rm m} \le L \le 10 \, {\rm m}$.}

\subsection{Analysis for the CPL case}

Fig.~\ref{fig:001_vs_size} shows the square root of the CRBs (RCRB), measured in meters $[\rm m]$, for the three components $x_C$, $y_C$ and $z_C$, as a function of the surface side length $L$, for a vertically oriented dipole located in CPL, i.e., under the hypotheses of Assumption~\ref{ass:cpl_vertical}. Both cases of unknown and known orientation are considered in Fig.~\ref{fig:unk001} and Fig.~\ref{fig:k001}, respectively. We see that all the RCRBs decrease fast with the surface side length, at least for values of $L$ of practical interest, i.e. in the range $1 \, {\rm m} \le L \le 10 \, {\rm m}$. \textcolor{blue}{The results in Fig.~\ref{fig:unk001} and Fig.~\ref{fig:k001} show that, for a vertically oriented dipole in CPL, the estimation accuracy is virtually the same with both known and unknown orientation - see also Fig.~\ref{fig:add_term}. We see that $\mathrm{RCRB}(x_C)$ is much lower than $\mathrm{RCRB}(y_C)$ and $\mathrm{RCRB}(z_C)$ in the range $1 \, {\rm m} \le L \le 10 \, {\rm m}$, and the asymptotic limit is achieved for $L \approx 20$ m. Also, as predicted by \eqref{eq:CRBYtoINF} and \eqref{eq:CRBZtoINF}, $\mathrm{RCRB}(y_C)$ and $\mathrm{RCRB}(z_C)$ decrease unboundedly as $L$ increases.} Notice that an accuracy on the order of tens of centimeters in all the three dimensions (as required for example in future automotive and industrial applications, e.g.,~\cite{Witrisal5Gaccuracy}) is achieved only for $L \approx 3 \, {\rm m}$, both with known or unknown orientation.

\begin{figure}[t!]
	\begin{overpic}[width=\columnwidth]{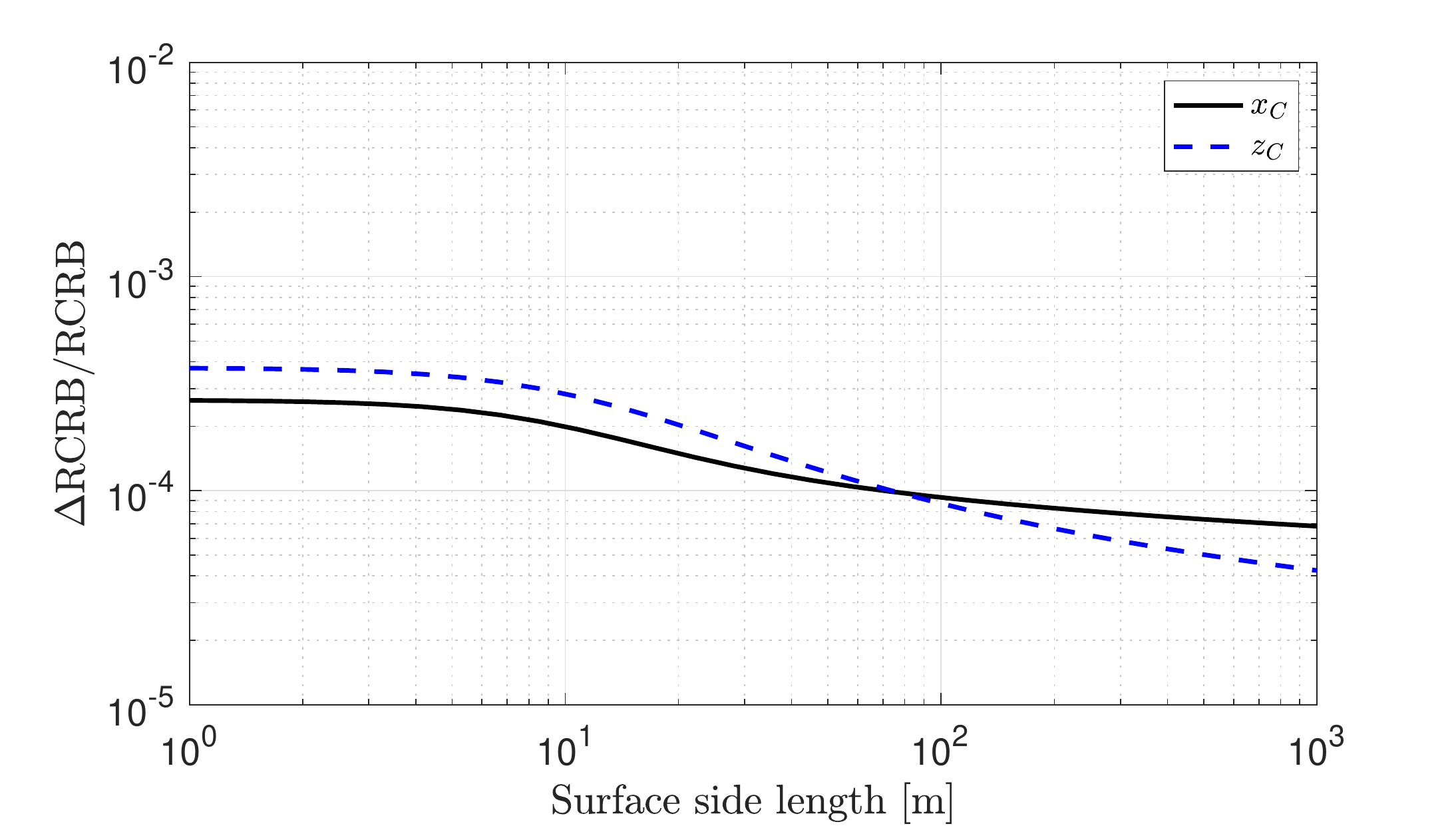}
	\end{overpic} 	

	\caption{Loss due to the lack of knowledge of dipole orientation}
	\label{fig:add_term}
\end{figure}

\begin{figure}[t!]
        \centering
	\begin{subfigure}[t]{1\columnwidth} \centering 
	\begin{overpic}[width=\columnwidth]{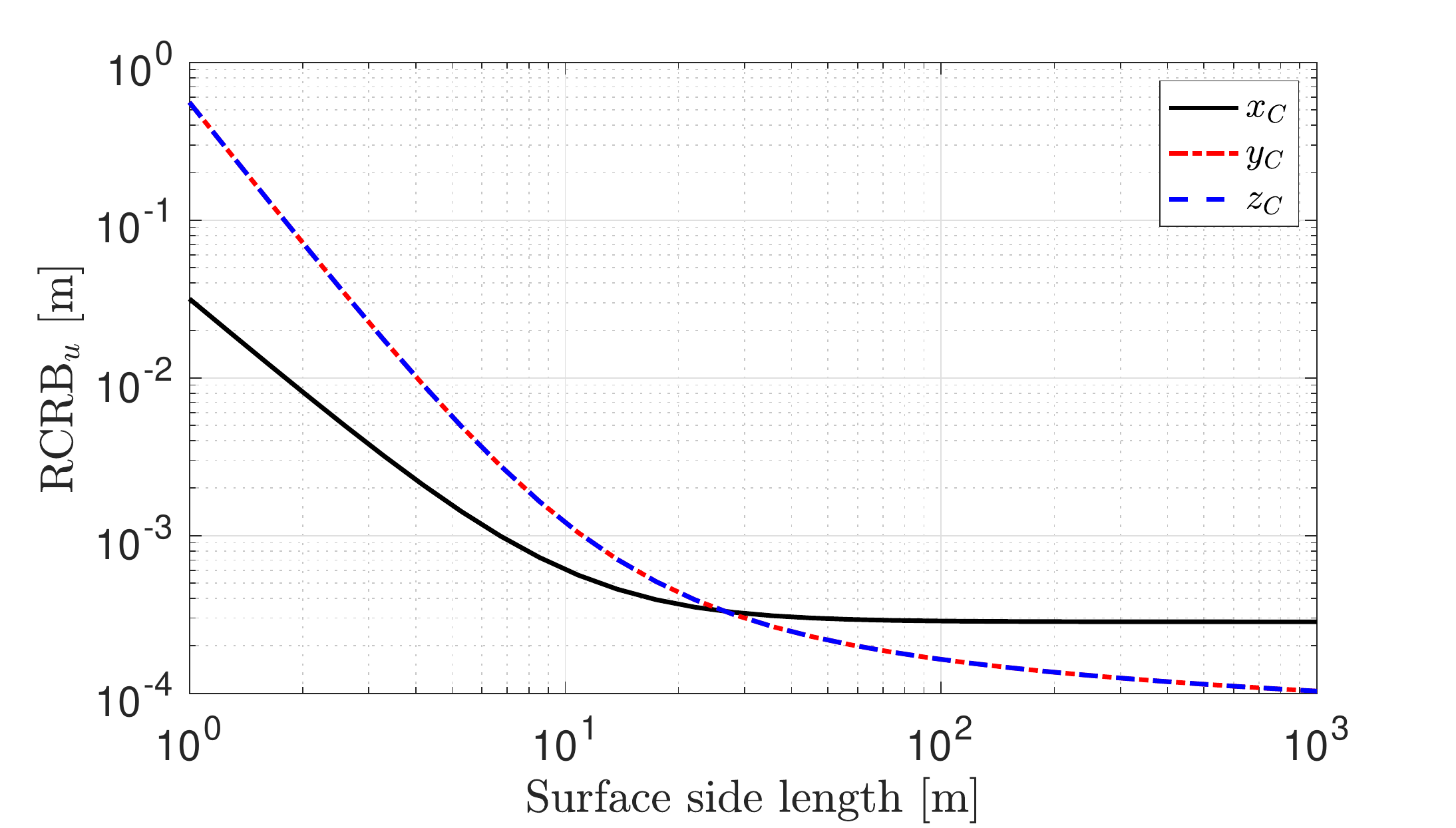}
	\end{overpic} 	

	\caption{Unknown dipole orientation, ${\bf \hat{t}}=(1,0,0)$}\vspace{0.2cm}
	\label{fig:unk100}
\end{subfigure}

\begin{subfigure}[t]{1\columnwidth} \centering 
	\begin{overpic}[width=\columnwidth]{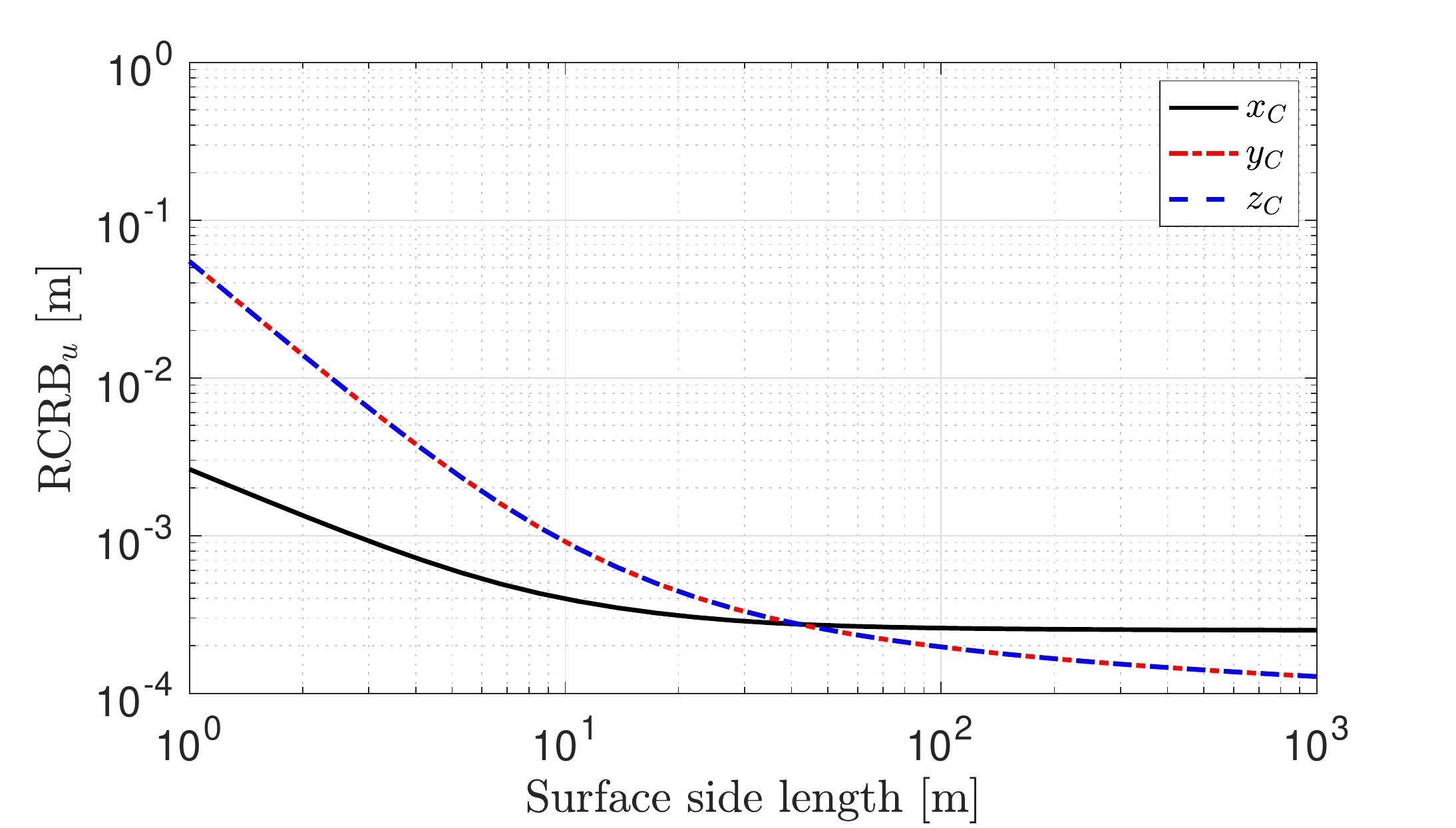}
	\end{overpic} 	

	\caption{Unknown dipole orientation, ${\bf \hat{t}}=(1,1,1)/\sqrt{3}$}
	\label{fig:unk111}
\end{subfigure}

\caption{RCRBs as a function of the surface side length for two different (unknown) dipole orientations.}
\end{figure}
{To quantify the difference between the CRBs with and without knowledge of dipole orientation, Fig.~\ref{fig:add_term} plots the quantity 
\begin{align}
\left[\dfrac{\Delta \rm{RCRB}}{\rm{RCRB}}\right]_{ii} \triangleq \sqrt{\dfrac{\left[\left(\mathbf{F}_{cc}-\mathbf{F}_{tc}^T\mathbf{F}_{tt}^{-1}\mathbf{F}_{tc}\right)^{-1}\right]_{ii}-\left[\mathbf{F}_{cc}^{-1}\right]_{ii}}{\left[\mathbf{F}_{cc}^{-1}\right]_{ii}}}
\end{align}
as obtained from \eqref{MIL}. Notice that $y_C$ is not reported as it is below the numerical precision of Matlab, as expected from the fact that, under Assumption~\ref{ass:cpl_vertical}, $\mathrm{CRB}_{\rm{u}}(y_C)=\mathrm{CRB}(y_C)$. We see that the loss, due to the lack of knowledge of dipole orientation, is negligible for all the considered values of $L$. {\color{orange}This conclusion does not hold true in general. In fact, the role of the dipole orientation in position estimation depends on the operating conditions, and should be verified on a case-by-case basis. In general, having or not knowledge of the dipole orientation has a negligible impact when one of the two (or both) conditions is satisfied: 1) the orientation is well estimated; 2) there is a weak interaction between orientation and position parameters, i.e., the FIM has a nearly block-diagonal structure. In the simulation setting of Figs. 3 and 4, the negligible impact is mainly due to a weak interaction between them.}}

\subsection{Impact of dipole orientation}

To quantify the impact of dipole orientation, Fig.~\ref{fig:unk100} and Fig.~\ref{fig:unk111} show the RCRBs for two different values of ${\bf \hat{t}}$, namely ${\bf \hat{t}}=(1,0,0)$ and ${\bf \hat{t}}=(1/\sqrt{3},1/\sqrt{3},1/\sqrt{3})$. In both cases, the dipole is in CPL and its orientation is unknown. Compared to Fig.~\ref{fig:unk001}, a sensible loss is evident only when ${\bf \hat{t}}=(1,0,0)$ and for small values of $L$. For example, with $L=3$ m the accuracy in the estimation of $x_C$ decreases from $10$ cm to $1$ m. Notice that, for both ${\bf \hat{t}}=(1,0,0)$ and ${\bf \hat{t}}=(1/\sqrt{3},1/\sqrt{3},1/\sqrt{3})$, the accuracy in the estimation of $y_C$ and $z_C$ is the same. Similar conclusions hold when the dipole orientation is known.


\begin{figure}[th!]\centering
	\begin{subfigure}[t]{0.8\columnwidth} \centering 
	\includegraphics[width=\columnwidth]{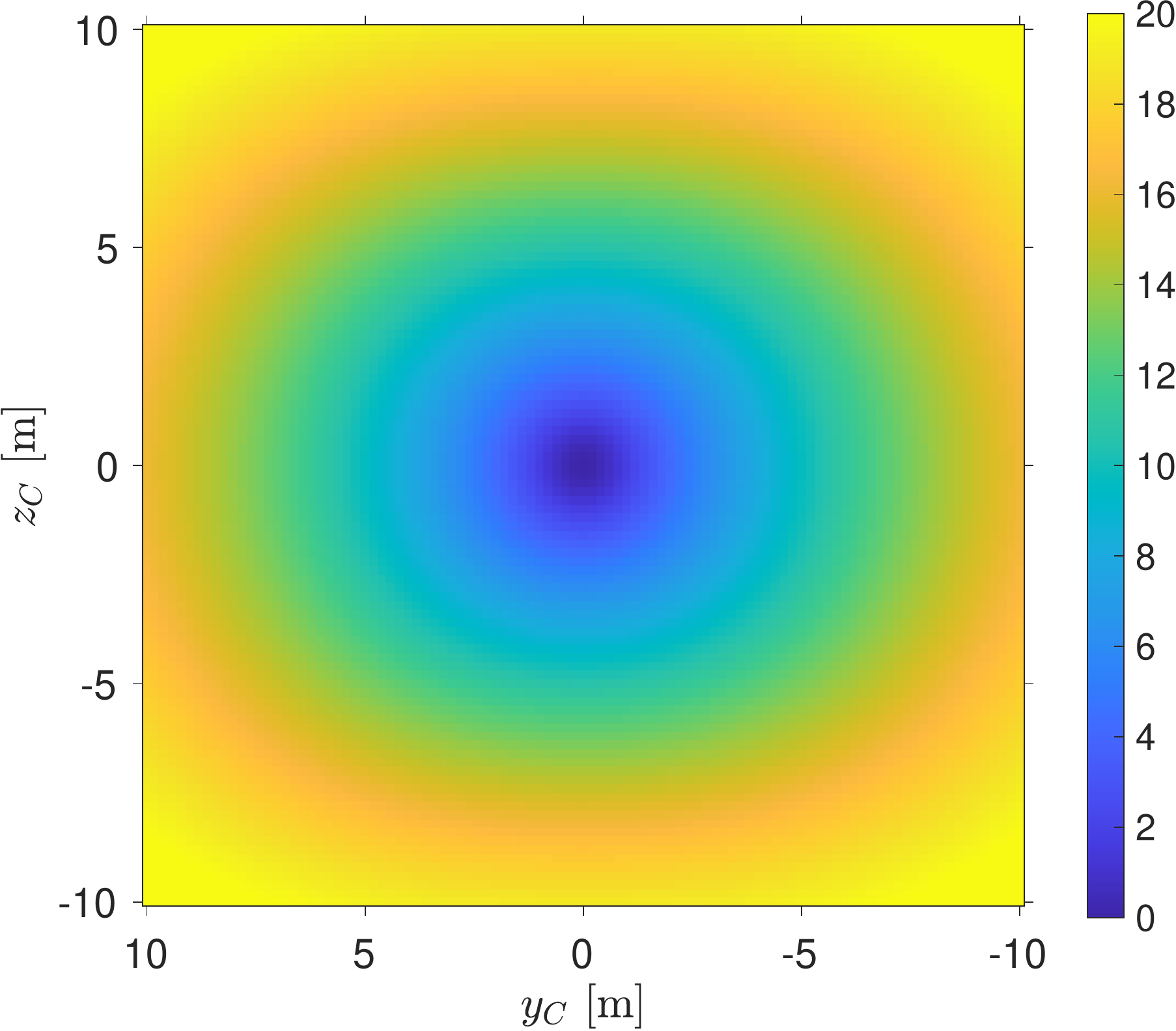}
	\caption{${\rm RCRB_u}(x_C)$ as a function of $y_C$ and $z_C$}
	\label{fig:unk_xc_vs_position}
\end{subfigure}
\\[12 pt]
\begin{subfigure}[t]{0.8\columnwidth} \centering 
	\includegraphics[width=\columnwidth]{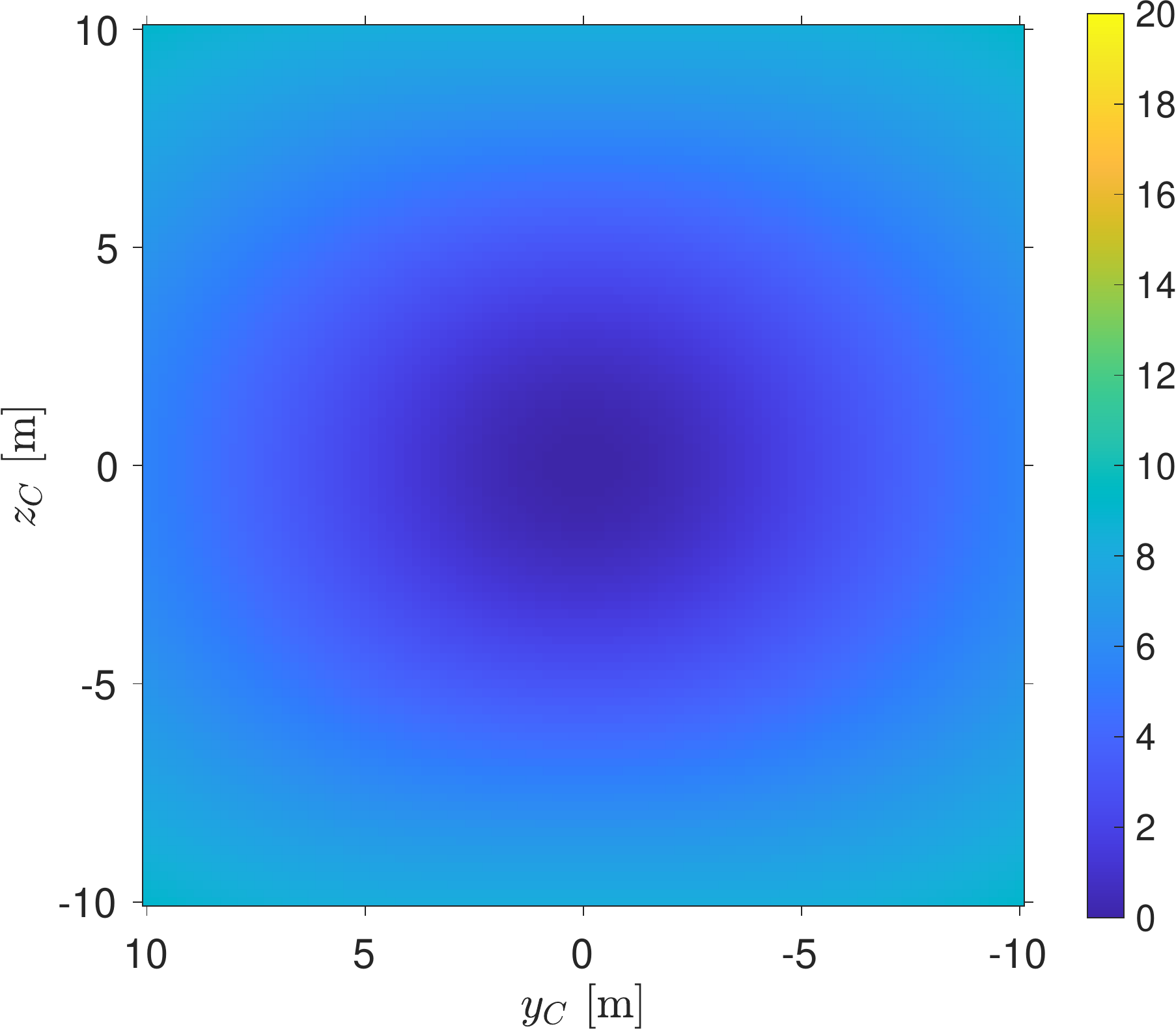}
	\caption{${\rm RCRB_u}(y_C)$ as a function of $y_C$ and $z_C$}
	\label{fig:unk_yc_vs_position}
\end{subfigure}
\\[12 pt]
\begin{subfigure}[t]{0.8\columnwidth} \centering 
	\includegraphics[width=\columnwidth]{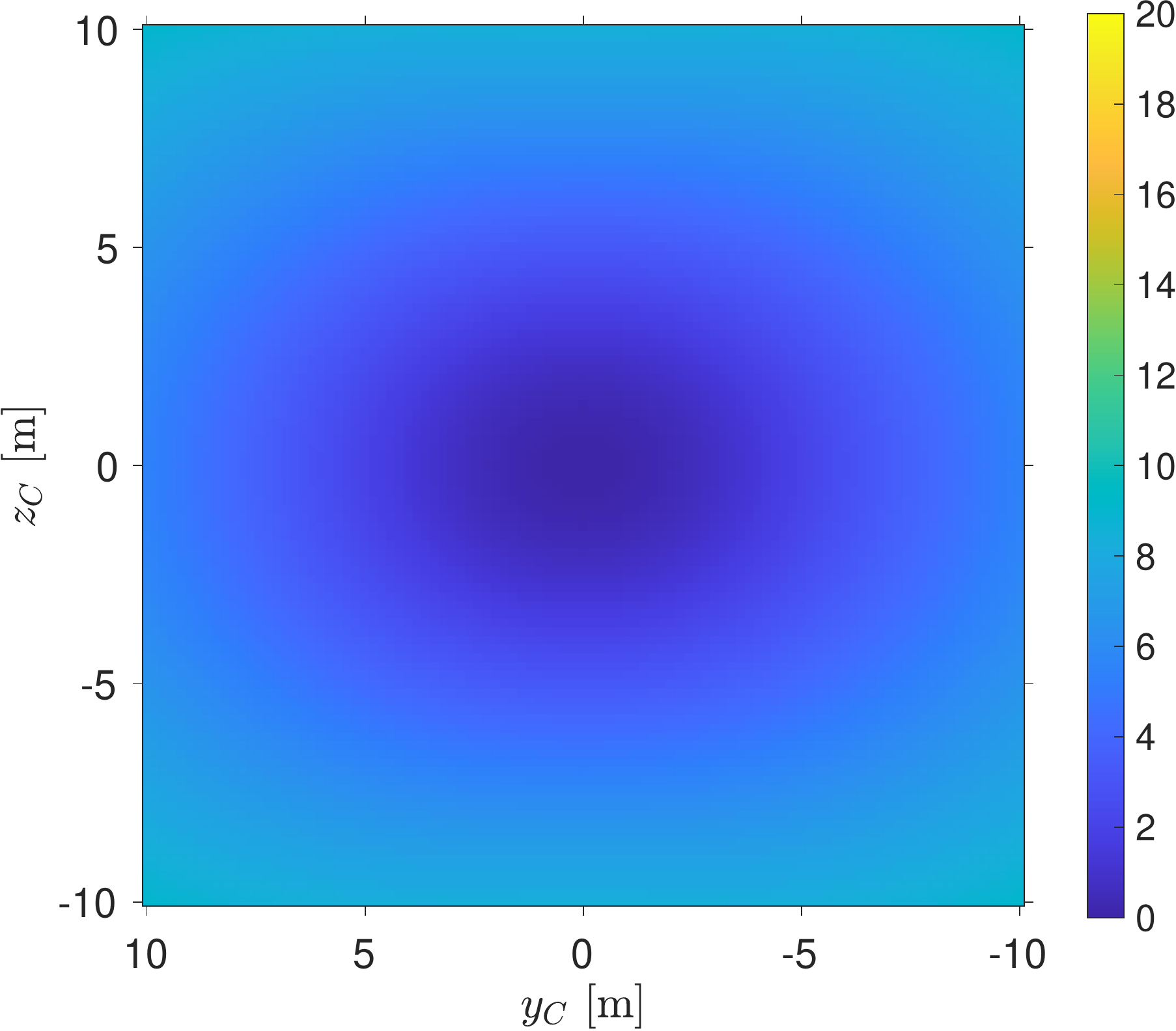}
	\caption{${\rm RCRB_u}(z_C)$ as a function of $y_C$ and $z_C$}
	\label{fig:unk_zc_vs_position}
\end{subfigure}
\caption{RCRBs in the case of unknown orientation as a function of $y_C$ and $z_C$ when $x_C=6$ m. The dipole is oriented vertically and $L=3$ m.}
\label{fig:unk_vs_position}
\end{figure}

\textcolor{blue}{Fig.~\ref{fig:unk_vs_position} shows the RCRBs in the case of unknown orientation, respectively, as a function of $y_C$ and $z_C$ when $x_C=6$ m. The dipole is oriented vertically and $L=3$ m. The value of the RCRB corresponding to a point $(y_C,z_C)$ is measured by the color of that point. More precisely, the RCRB values are first normalized to their minimum, which is achieved when the dipole is in CPL ($y_C=z_C=0$), and then the normalized values (in dB) are mapped into a colour: higher values are associated to warm colours, lower values to cool ones. This means, for example, that the blue zones in each figure correspond to the best estimation accuracy. Fig.~\ref{fig:unk_vs_position} clearly shows the different behaviours of various RCRBs when the dipole moves away from the CPL position. For example, we see that ${\rm RCRB_u}(x_C)$ increases faster than ${\rm RCRB_u}(y_C)$ and ${\rm RCRB_u}(z_C)$ whatever the direction of motion is. On the other hand, ${\rm RCRB_u}(y_C)$ and ${\rm RCRB_u}(z_C)$ have a similar behavior. The same conclusions hold when the dipole orientation is known.}

\blue{From the above analyses, we conclude that the CRBs are approximately the same whether or not the orientation of the transmitting dipole is known. However, this does not necessarily mean that we can ignore the effect of dipole orientation in the estimation process. It only means that the joint estimation of $\bf t$ and $\bf u$ ultimately provides the same localization accuracy as if $\bf t$ were known. The impact of the orientation knowledge will be quantified in Section VI where MLEs will be considered.}

\begin{figure}[t!]\centering

	\includegraphics[width=1.02\columnwidth]{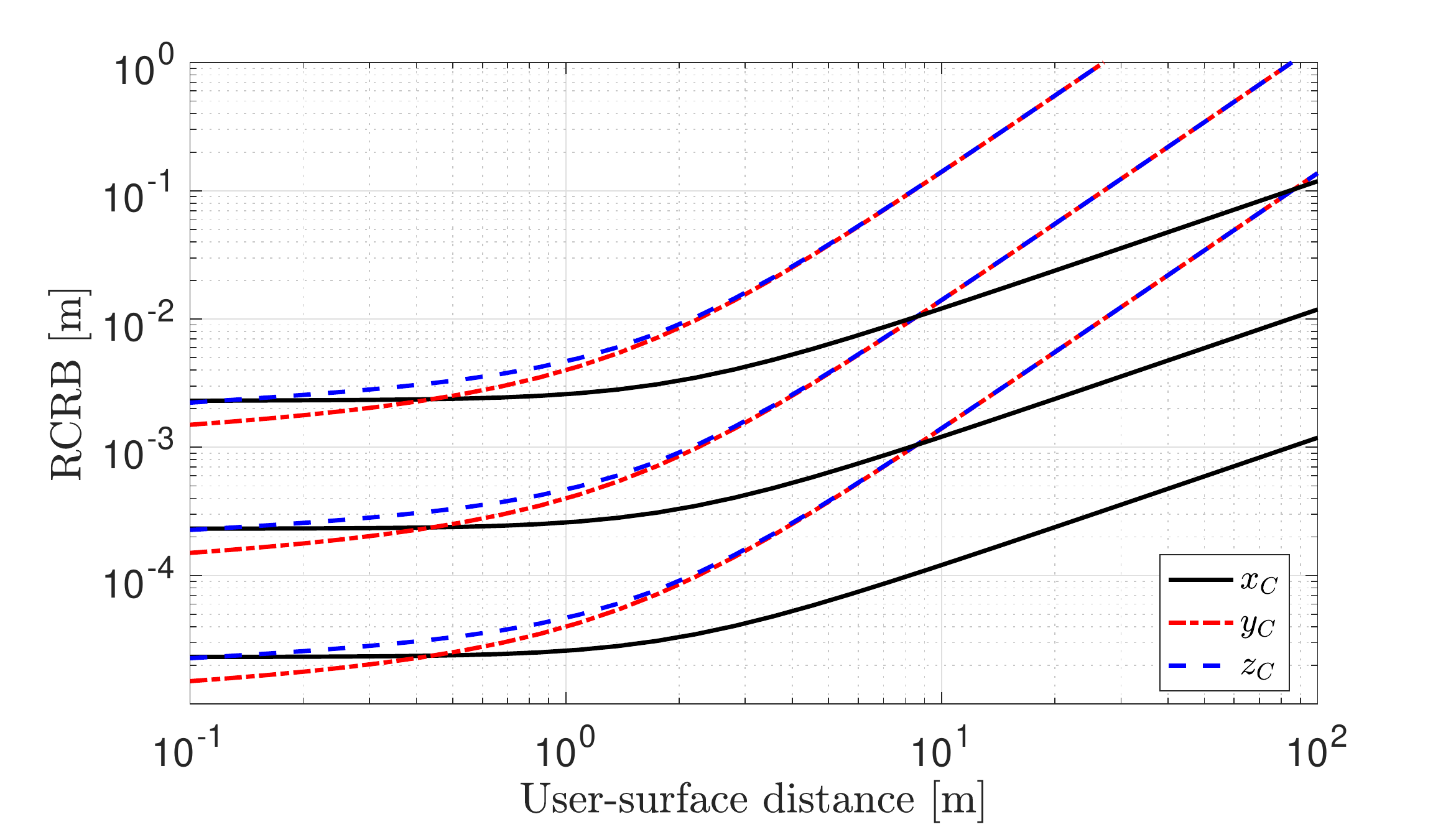}
		\put(-205,39){\footnotesize $\lambda =0.001$}
	 \put(-195,39){\vector(1, -1){5}}
	 	\put(-188,29){\oval(4,12)[l]}
	\put(-205,63){\footnotesize $\lambda =0.01$\,m}
	\put(-195,63){\vector(1, -1){5}}
	\put(-188,52){\oval(4,14)[l]}
		\put(-205,91){\footnotesize $\lambda =0.1$\,m}
	\put(-195,91){\vector(1, -1){5}}
	\put(-188,77){\oval(4,14)[l]}\caption{RCRBs as a function of $x_C$ for a vertical dipole in CPL and $L = 3$ m, with $\lambda = 0.1,~0.01$ or $0.001$\,m.}	\label{fig:dist_A}
\end{figure}

\subsection{Impact of carrier frequency}

\textcolor{blue}{Fig.~\ref{fig:dist_A} shows the RCRBs as a function of $x_C$ for three different values of the wavelength, namely $\lambda=0.1$ m (corresponding to $f_c=3$\,GHz), $\lambda=0.01$ m (corresponding to $f_c=30$\,GHz) and $\lambda=0.001$ m (corresponding to $f_c=300$\,GHz). The dipole is in CPL and is oriented vertically. Its orientation is unknown to the receiver. The surface side length is $L=3$ m.  As expected, the estimation accuracy reduces as the distance between the source and the observation region increases. In particular, we see that ${\rm RCRB}(x_C)$ increases \textit{slower} than ${\rm RCRB}(y_C)$ and ${\rm RCRB}(z_C)$. Notice that the RCBRs depend linearly on $\lambda$, at least in the range of values of $x_C$ (and hence $\rho$) considered in Fig.~\ref{fig:dist_A}. Indeed, reducing the wavelength by a factor of $10$ reduces the RCBRs of the same factor. This can easily be derived analytically for $x_C$ and $y_C$ by considering the results in Proposition~\ref{prop:xcgglambda}. This holds true also for $z_C$, as it is shown in Fig.~\ref{fig:dist_A}. Notice that the same dependence on $\lambda$ was already observed in \cite{Hu2018b}. Similar results (not shown due to space limitations) can be obtained if the dipole orientation is known. Marginal differences are only observed for $\lambda =0.1$\,m and small values of $L$, i.e., $L\le 0.1$\,m.}

\section{Evaluating the impact of different models}
\textcolor{blue}{To showcase how different electric field models impact the estimation accuracy, we now use the derived CRBs to benchmark different MLEs derived on the basis of a discrete representation of misspecified models. Particularly, we consider a practical scenario in which the observation region is filled with vertically-oriented short dipoles. The analysis is carried out under the hypotheses of Assumption~\ref{ass:Frau} for $l_s = \lambda/4$ and $\lambda =0.1$. Since $2 l_s^2 /\lambda = \lambda/8 = 0.0125$\,m, this means that it is valid for all distances of practical relevance.}

 \begin{remark}
 {\color{orange} When the assumed model differs from the true one, the estimation problem is said to be misspecified or mismatched~\cite{white1982maximum,richmond2015parameter}. In these circumstances, fundamental limits can be computed by resorting to the mismatched estimation theory. which allows to derive the CRBs under model mismatching. This is without any doubt an interesting extension of our theoretical analysis, which is left for future work. A comprehensive review on the subject can be found in~\cite{fortunati2017performance}.}
\end{remark}

\vspace{-0.7cm}
\textcolor{blue}{\subsection{Discrete signal model}}
\textcolor{blue}{We assume that the observation region is filled with short dipoles of length $l_r = \lambda/10$, vertically oriented and placed on a square grid. The centers of the dipoles are the set of points of $\mathcal{R}_o$ given by $\{(x,y,z): x=0, y=m \lambda/{2}, z=n \lambda/{2}\}$, with $1 \le |m|,|n|  \le N_r$ and $N_r=\lfloor L/\lambda \rfloor$. The voltage $V_{mn}$ observed at the output of the $(m,n)$ receive dipole is obtained by integrating over the antenna length the vertical component given by
\begin{equation}
\xi_z({\bf r})={e}_{z}({\bf r}) + n_{z}({\bf r}).
\end{equation}
Since the Hertzian dipole is electrically small, i.e., $l_r \ll \lambda$, it follows that $V_{mn}$ can be approximated as
\begin{equation}
\label{V_mn}
V_{mn}=\int_{l_r} \xi_z({\bf r}) dz \approx h_{mn}+\nu_{mn}
\end{equation}
where
\begin{align}
h_{mn}=l_{r}e_{z}({\bf r}_{mn})
\label{hmn}
\end{align}
with ${\bf r}_{mn}=x_C {\bf{\hat x}}+(y_m-y_C){\bf{\hat y}}+(z_n-z_C){\bf{\hat z}}$, and $\{\nu_{mn}\}$ are independent zero-mean gaussian random variables, with variance {$\sigma^2_{\nu}=2\sigma^2 l_r/\lambda$}.}

\textcolor{blue}{\subsection{Maximum-likelihood estimation under \textit{misspecified} models}}

\blue{The log-likelihood function for the estimation of ${\bf \hat{t}}=(t_x,t_y,t_z)$ and ${\bf u}=(x_C,y_C,z_C)$ on the basis of the observations $\{V_{mn};|m|,|n|=1,\ldots,N_r\}$ is given by~\cite[Ch. 7]{Kay1993a}
\begin{equation}
\label{LLF}
\Lambda(\tilde{\bf t},\tilde {\bf u})=-\underset{|m|,|n|=1,\ldots,N_r}{\sum\sum} |V_{mn}-\tilde h_{mn}|^2
\end{equation}
where $\tilde{\bf t}=(\tilde t_x,\tilde t_y,\tilde t_z)$ and $\tilde {\bf u}=(\tilde x_C,\tilde y_C,\tilde z_C)$ are trial values for $\bf{\hat t}$ and $\bf u$, respectively, and $\tilde h_{mn}$ is obtained accordingly. Different MLEs can be obtained if different models are assumed for $\tilde h_{mn}$ in \eqref{LLF}. Specifically, we consider the following three.
\begin{enumerate}
  \item The first MLE (MLE1) relies on the model provided in~\eqref{E_P_PTAI} and assumes that:
  \begin{equation} 
\label{hmn1}
h_{mn}^{(1)}=l_{r} e_{z}({\bf r}_{mn}).
\end{equation}
  \item The second MLE (MLE2) makes use of the signal model adopted in~\cite{Hu2018b}. Hence, we have that:
  \begin{equation}
\label{hmn2}
h_{mn}^{(2)}=l_{r}G(r) \sqrt{\dfrac{x_{C}}{r}}.
\end{equation}
  \item The third MLE (MLE3) is based on the standard planar approximation~\eqref{Gr3} under which:
  \begin{equation}
\label{hmn3}
h_{mn}^{(3)}=l_{r}G(r_{C})e^{-\imagunit k ({\bf \hat r}_C \cdot {\bf d}_{mn})}
\end{equation} with ${\bf d}_{mn}=y_m{\bf{\hat y}}+z_n{\bf{\hat z}}$. \textcolor{orange}{Notice that with the standard planar wave model, the amplitude of the received signal is proportional to $1/r_{C}$. Such a dependence can effectively be exploited for the estimation of $r_{C}$ provided that the proportionality factor is exactly known.}
\end{enumerate}}
\blue{The first MLE takes the form
\begin{equation}
\label{MLEapp}
{\bf{\hat t}}^{(1)},{\bf u}^{(1)}=\arg \underset{\tilde{\bf t},\tilde{\bf u}}{\max} \, \Lambda^{(1)}(\tilde{\bf t},\tilde {\bf u})
\end{equation}
where
\begin{equation}
\label{LLFapp}
    \Lambda^{(1)}(\tilde{\bf t},\tilde {\bf u})=-\underset{|m|,|n|=1,\ldots,N_r}{\sum\sum} |V_{mn}-\tilde h^{(1)}_{mn}|^2.\end{equation}
In case of known orientation, the maximization of \eqref{MLEapp} is carried out only over $\tilde {\bf u}$, after replacing $\tilde{\bf t}$ with the true value ${\bf \hat {t}}$ in $\tilde h^{(1)}_{mn}$.}

\blue{The MLEs derived from \eqref{hmn2}--\eqref{hmn3} are in the following form:
\begin{equation}
\label{MLEapp23}
{\bf u}^{(i)}=\arg \underset{\tilde{\bf u}}{\max} \, \Lambda^{(i)}(\tilde {\bf u})
\end{equation}
where
\begin{equation}
\label{LLFapp23}
    \Lambda^{(i)}(\tilde {\bf u})=-\underset{|m|,|n|=1,\ldots,N_r}{\sum\sum} |V_{mn}-\tilde h^{(i)}_{mn}|^2\end{equation}
 for $i=2,3$.} \blue{Notice that \eqref{MLEapp23} needs to be optimized only with respect to $\tilde {\bf u}$. This is because $h_{mn}^{(2)}$ and $h_{mn}^{(3)}$ do not account for the dependence of the received signal on the orientation of the source dipole. This will have a profound impact on the estimators' performance.} 
\begin{figure}[t!]\centering
	\begin{subfigure}[t]{1\columnwidth} \centering 
	\includegraphics[width=\columnwidth]{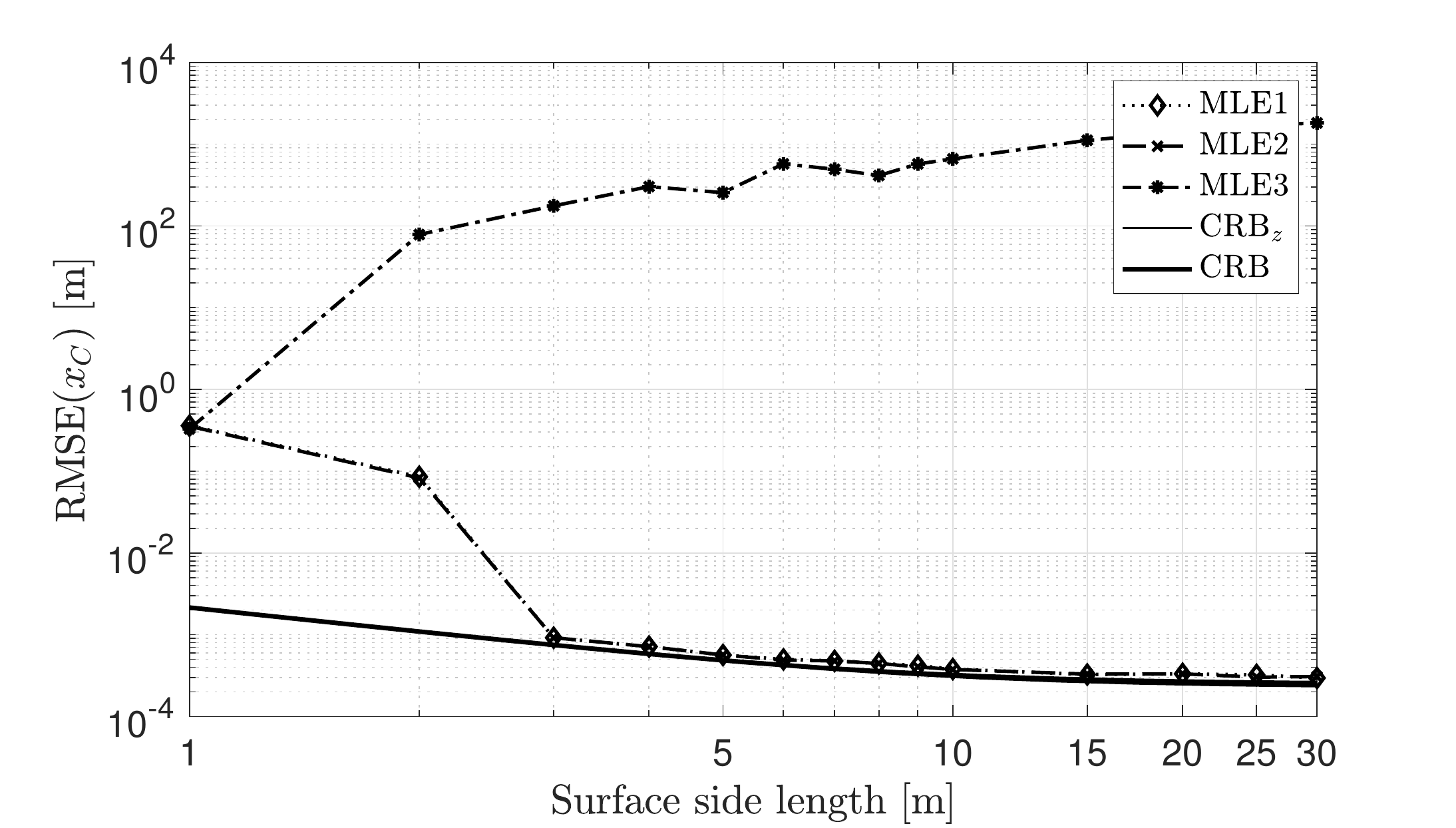}
	\caption{Estimation accuracy for $x_C$}
	\label{fig:MLaccuracy_x_c}
\end{subfigure}

	\begin{subfigure}[t]{1\columnwidth} \centering 
	\includegraphics[width=\columnwidth]{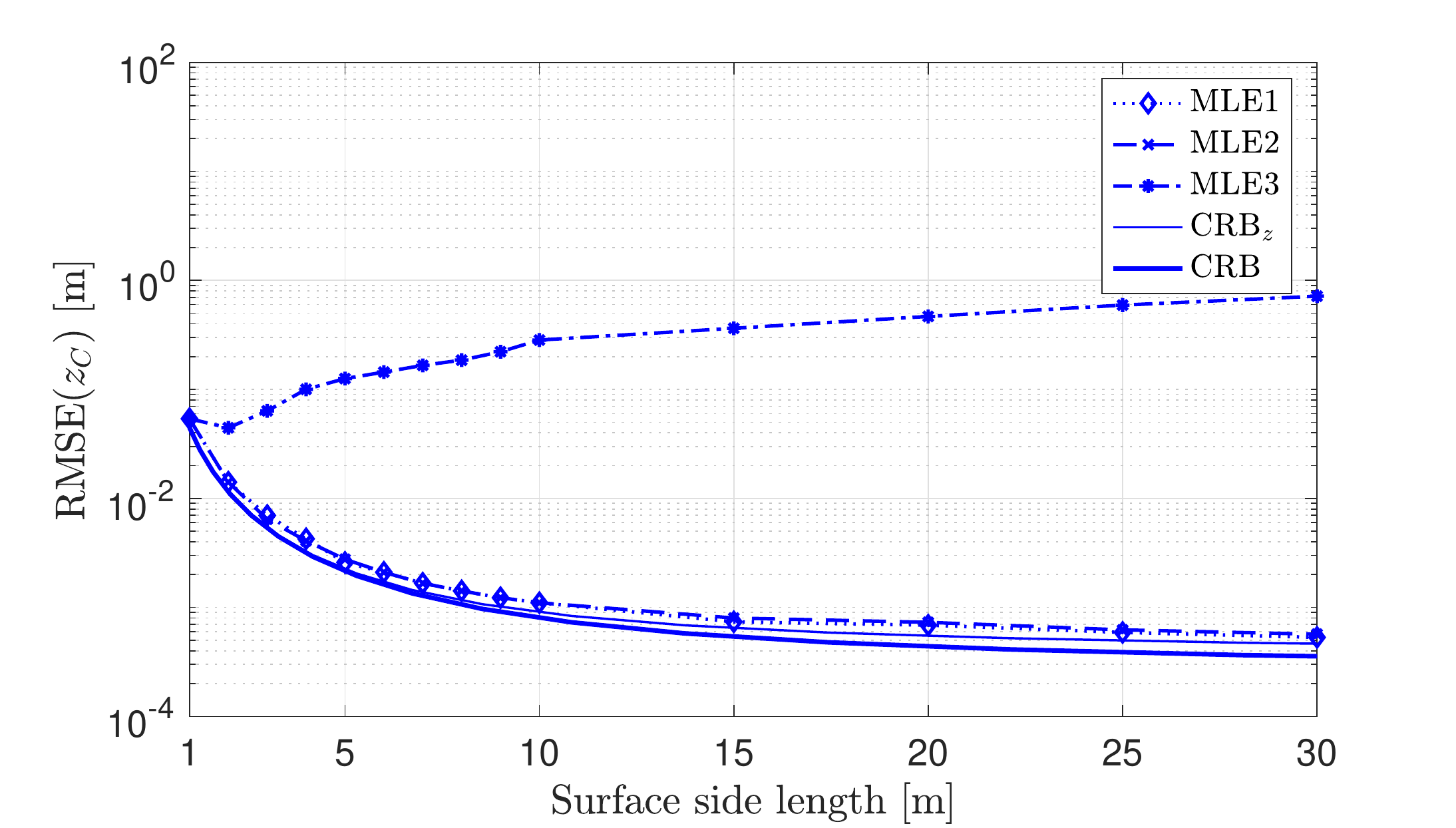}
	\caption{Estimation accuracy for $z_C$}
	\label{fig:MLaccuracy_z_c}
\end{subfigure}

\caption{Comparisons between MLEs when ${\bf \hat {t}}=(0,0,1)$.}
\label{RCRBvsDistance}
\end{figure}

\begin{figure}[t!]\centering
	\begin{subfigure}[t]{1\columnwidth} \centering 
	\includegraphics[width=\columnwidth]{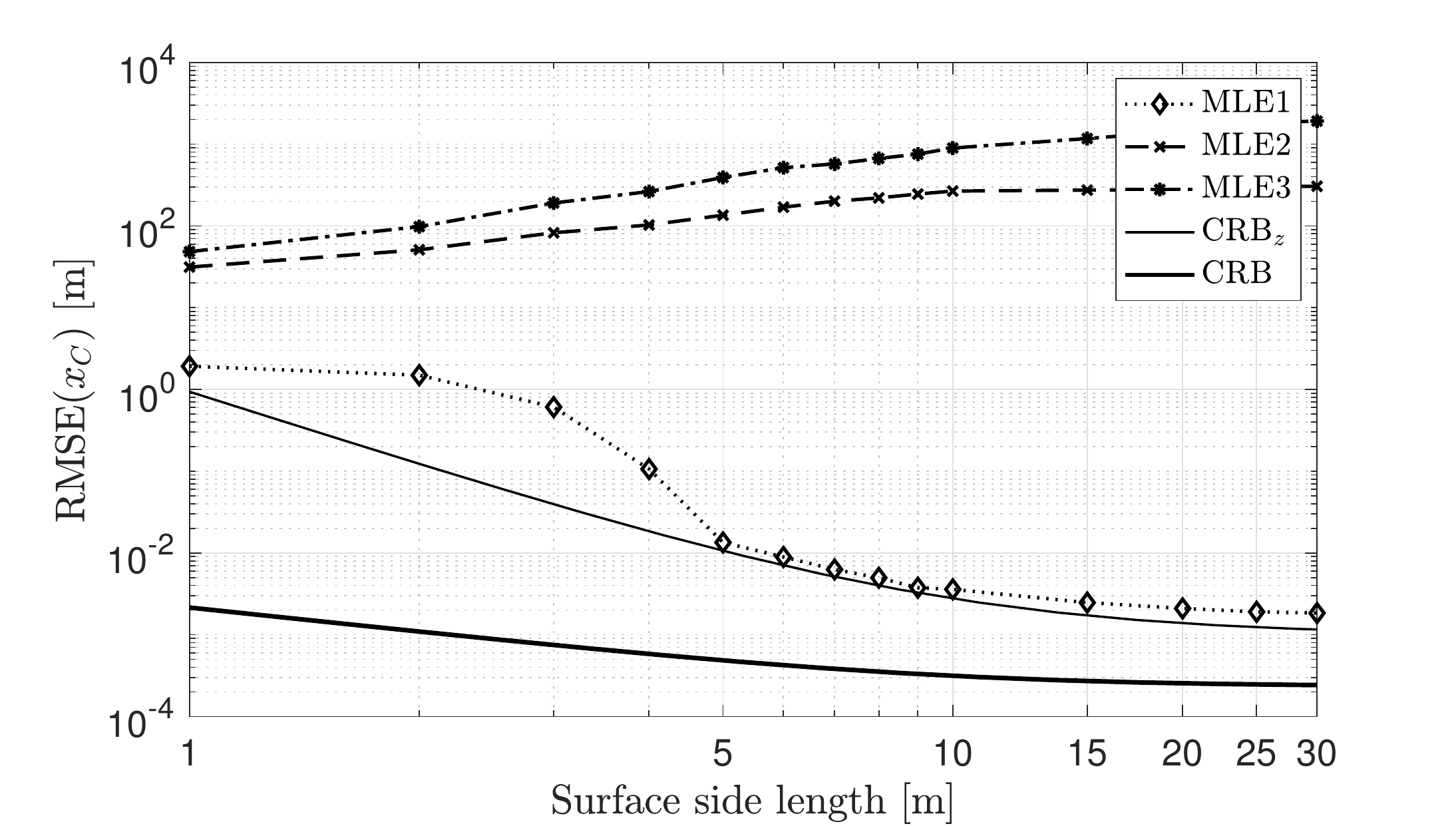}
	\caption{Estimation accuracy for $x_C$}
	\label{fig:MLaccuracy_x_c_010}
\end{subfigure}

	\begin{subfigure}[t]{1\columnwidth} \centering 
	\includegraphics[width=\columnwidth]{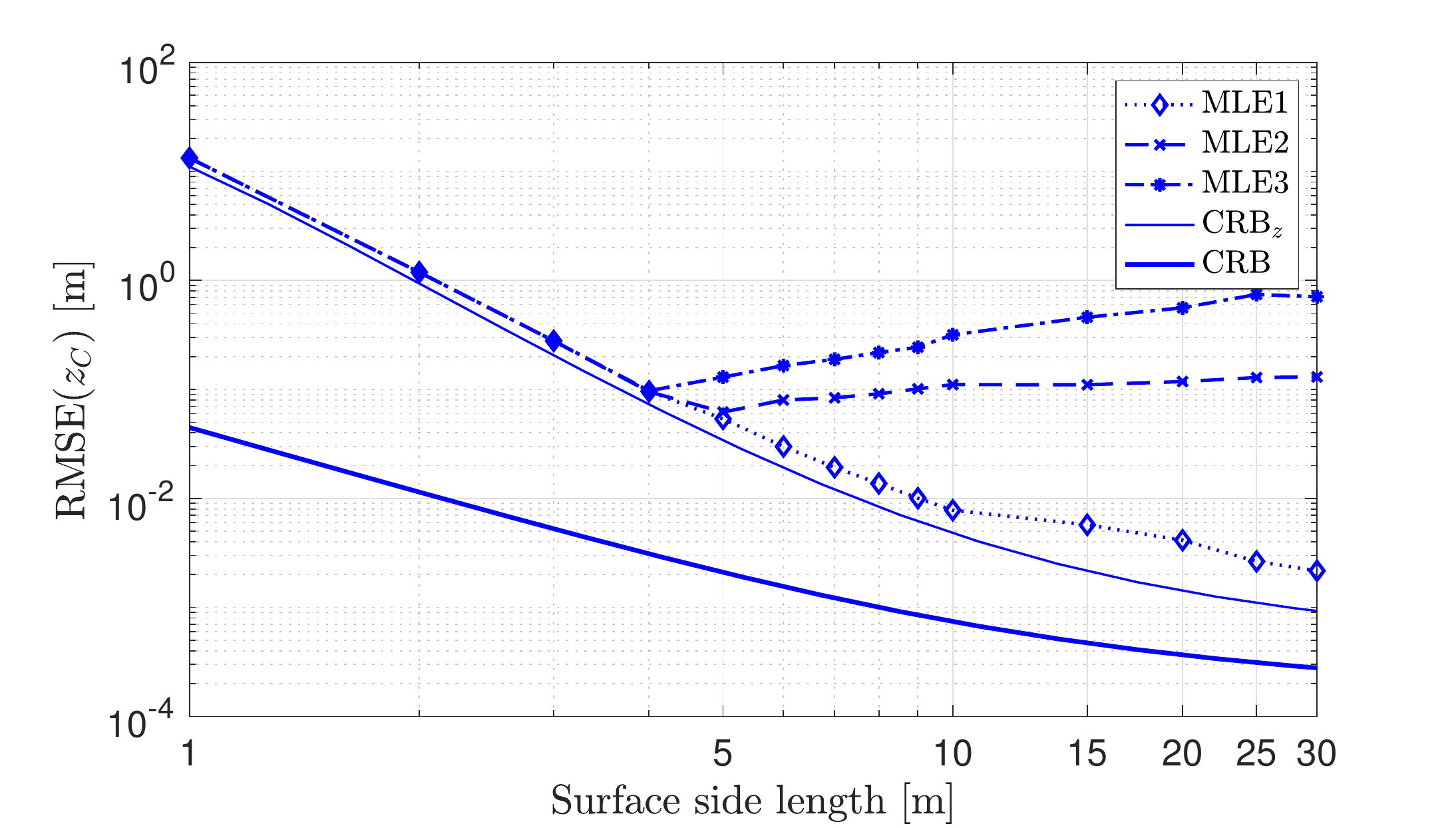}
	\caption{Estimation accuracy for $z_C$}
	\label{fig:MLaccuracy_z_c_010}
\end{subfigure}
\caption{Comparisons between MLEs when ${\bf \hat {t}}=(0,1,0)$.}
\label{RCRBvsDistance}
\end{figure}

\subsection{Numerical analysis}
\textcolor{blue}{The accuracy of the above three MLEs is now quantified in terms of the square root of the MSE (RMSE), measured in meters and given by
\begin{equation}
\label{RMSE}
{\rm RMSE}(\upsilon)=\sqrt{{\rm E}(\upsilon-\hat \upsilon)^{2}}
\end{equation}
where $\upsilon$ stands for $x_{C}$, $y_{C}$ or $z_{C}$, and $\hat \upsilon$ is its ML estimate. 
The estimation accuracy of the three MLEs is compared with the provided CRBs. Since $V_{mn}$ is obtained by integrating $\xi_z$, we also report CRB$_{z}$, which is derived on the observation of the only $z$-component of the electric field.} 

\blue{Fig.~\ref{fig:MLaccuracy_x_c} shows ${\rm RMSE}(x_{C})$ as a function of $L$, measured in meters. The transmitting dipole is in CPL, with $x_{C}=6$ m, and is vertically oriented, i.e., ${\bf \hat {t}}=(0,0,1)$. The wavelength is $\lambda=0.1$ m, and ${\rm SNR}=30$ dB.  The following interesting conclusions can be drawn from the results in Fig.~\ref{fig:MLaccuracy_x_c}:}
\begin{itemize}
  \item A negligible difference is observed between ${\rm CRB}(x_{C})$ and ${\rm CRB}_{z}(x_{C})$. This means that, when the transmit and receive dipoles have the same orientation, the component of the electric field along the dipoles' orientation provides almost all the information about the source position.
  \item  MLE1 performs closely to the CRB in both cases of known and unknown orientation. This confirms the results from Section~\ref{Sec:Numerical_analysis}, in which it has been shown that $\rm CRB$ and ${\rm CRB}_{u}$ are practically the same.
  \item When the transmit dipole is vertically oriented, the estimation accuracy achieved with MLE2 is the same as that with MLE1.
  \item MLE3 has poor performance and its estimation accuracy worsens as $L$ grows.  \textcolor{orange}{This is due to the fact that the underlying planar model in \eqref{hmn3} becomes more and more inaccurate as the surface side length gets larger.}
\end{itemize}
%

\blue{The same observations are valid for $y_{C}$ and $z_{C}$. For example, Fig.~\ref{fig:MLaccuracy_z_c} shows ${\rm RMSE}(z_{C})$ in the same setting of Fig.~\ref{fig:MLaccuracy_x_c}. We see that MLE1 and MLE2 have similar behaviors, and are very close to ${\rm CRB}_{z}(z_{C})$, while MLE3 is far from the bound. A slight difference is observed between ${\rm CRB}(z_{C})$ and ${\rm CRB}_{z}(z_{C})$, for $L > 5$ m.}

\blue{Fig.~\ref{fig:MLaccuracy_x_c_010} illustrates ${\rm RMSE}(x_{C})$ assuming that the transmitting dipole is oriented along the $y$-direction, i.e., ${\bf \hat {t}}=(0,1,0)$. The other parameters are the same as in Fig.~\ref{fig:MLaccuracy_x_c}. Compared to Fig.~\ref{fig:MLaccuracy_x_c}, we observe two fundamental differences. First, ${\rm CRB}(x_{C})$ and ${\rm CRB}_{z}(x_{C})$ differ significantly. This means that, in general, to achieve the best estimation accuracy all the components of the electric field should be considered. Secondly, MLE2 has poor performance, because it ignores the dependence of the received signal on the orientation of the transmit dipole. This confirms the importance of the term $ \left[{\bf \hat r} {\bf \times} {\bf R}(\theta,\phi) \right] {\bf \times} {\bf \hat r}$ in \eqref{E_P.0}, which is not considered in the standard models adopted in literature. \textcolor{orange}{Thirdly, the estimation accuracy with both MLE2 and MLE3 get worse as the surface side length gets larger due to the increasing inaccuracy of the underlying  wave models in \eqref{hmn2} and \eqref{hmn3}, respectively.} Similar conclusions can be drawn from the plots in Fig.~\ref{fig:MLaccuracy_z_c_010}, which show the accuracy in the estimation of $z_{C}$. In this case, we see that the performance of MLE1 differs from ${\rm CRB}_{z}(x_{C})$ only for $L>10$ m, where the estimation accuracy is anyway better than $1$ cm.
}



\section{Conclusions}\label{Sec:conclusions}
Large antenna arrays and high frequencies opens up opportunities for new signal processing algorithms for positioning. {\color{orange}Motivated by the need of establishing ultimate bounds, we provided a general model for the electric vector field over a spatially-continuous rectangular region.} Unlike standard models in signal processing, the functional dependence on the radiation vector at the source \blue{is intrinsically captured by the model}. The electric vector field model was used to compute the CRBs for the three-dimensional (3D) spatial location of a Hertzian dipole, with and without a priori knowledge of its orientation. Further simplifications and insights were obtained by assuming that the dipole center is located on the line perpendicular to the surface center. Numerical results showed that a centimeter-level accuracy {\color{orange}in the mmWave and sub-THz bands can only be achieved with surfaces of size in the range of a few meters.} Asymptotic expressions were also given in closed-form to show the scaling behaviors with respect to surface area and wavelength. \blue{To show how different electric field models impact the estimation accuracy, we used the derived CRBs to benchmark different MLEs. The analysis showed that the standard model neglecting the radiation angular pattern of the transmitting source may provide low estimation accuracy in the presence of large surfaces.}

{\color{orange}The statements above does not want to mean that the standard model is not useful. Only that its limitations need to be clearly acknowledged and understood and its performance carefully evaluated on a case-by-case basis. The standard models may be sufficiently accurate in some situations and inadequate in others. It is essential to evaluate the performance of algorithms based on the standard model against data generated using the analytic model.}


\section*{Appendix A}

\label{app:derCRB_dipole}

We compute the derivatives of the cartesian components of the electric field with respect to the unknown parameters $(t_x,t_y,t_z,x_C,y_C,z_C)$. Starting from \eqref{CartesianComponents_PTAI_x}--\eqref{CartesianComponents_PTAI_z}, after lengthy but standard calculations we have
\begin{align}
\label{6D_derparE_t}
\derpar{e_{\alpha}}{t_{\beta}}=- \oj \chi \dfrac{\mathrm{e}^{-\oj kr}}{r} \cdot \begin{cases}
      1-r_{\alpha}^2& \text{per $\beta=\alpha$}, \\
      -r_{\alpha}r_{\beta}& \text{per $\beta \ne \alpha$}
\end{cases}
\end{align}
with $\alpha,\beta \in \{x,y,z\}$ and $r_{\alpha}$ being defined in \eqref{rcomp},
\begin{subequations}
\label{6D_derParEx}
\begin{align}
\label{6D_Ex_xC}\notag
\derpar{e_x}{x_C}=\dfrac{-\oj \chi \mathrm{e}^{-\oj kr}}{r^4}\left\{\vphantom{\frac{1}{1}}\oj k \bar{x} \left[t_x (\bar{y}^2+\bar{z}^2)-\bar{x}(t_y \bar{y}+t_z \bar{z})\right]\right.\\ \left.+\dfrac{3 t_x \bar{x}(\bar{y}^2+\bar{z}^2)-(t_y \bar{y}+t_z \bar{z}) (2 \bar{x}^2-\bar{y}^2-\bar{z}^2)}{r} \right\}
\end{align}
\begin{align}
\label{6D_Ex_yC}\notag
\derpar{e_x}{y_C}=\dfrac{-\oj \chi \mathrm{e}^{-\oj kr}}{r^4} \left\{\vphantom{\frac{1}{1}}\oj k \bar{y} \left[t_x (\bar{y}^2+\bar{z}^2)-\bar{x}(t_y \bar{y}+t_z \bar{z})\right]\right.\\ \left.-\dfrac{t_x \bar{y} (2 \bar{x}^2-\bar{y}^2-\bar{z}^2) - t_y \bar{x} (\bar{x}^2 -2 \bar{y}^2 +\bar{z}^2)+3 t_z \bar{x} \bar{y} \bar{z}}{r}\right\}
\end{align}
\begin{align}
\label{6D_Ex_zC}\notag
\derpar{e_x}{z_C}=\dfrac{-\oj \chi \mathrm{e}^{-\oj kr}}{r^4} \left\{\vphantom{\frac{1}{1}}\oj k \bar{z} \left[t_x (\bar{y}^2+\bar{z}^2)-\bar{x}(t_y \bar{y}+t_z \bar{z})\right]\right.\\ \left.-\dfrac{t_x \bar{z} (2 \bar{x}^2-\bar{y}^2-\bar{z}^2) +3 t_y \bar{x} \bar{y} \bar{z}- t_z \bar{x} (\bar{x}^2 + \bar{y}^2 -2\bar{z}^2)}{r}\right\}
\end{align}
\end{subequations}

\begin{subequations}
\label{6D_derParEy}
\begin{align}
\label{6D_Ey_xC}\notag
\derpar{e_y}{x_C}=\dfrac{-\oj \chi \mathrm{e}^{-\oj kr}}{r^4} \left\{\vphantom{\frac{1}{1}}\oj k \bar{x} \left[t_y (\bar{x}^2+\bar{z}^2)-\bar{y}(t_x \bar{x}+t_z \bar{z})\right]\right.\\ \left.-\dfrac{t_y \bar{x} (2 \bar{y}^2-\bar{x}^2-\bar{z}^2) - t_x \bar{y} (\bar{y}^2 -2 \bar{x}^2 +\bar{z}^2)+3 t_z \bar{x} \bar{y} \bar{z}}{r}\right\}
\end{align}
\begin{align}
\label{6D_Ey_yC}\notag
\derpar{e_y}{y_C}=\dfrac{-\oj \chi \mathrm{e}^{-\oj kr}}{r^4}\left\{\vphantom{\frac{1}{1}}\oj k \bar{y} \left[t_y (\bar{x}^2+\bar{z}^2)-\bar{y}(t_x \bar{x}+t_z \bar{z})\right]\right.\\ \left.+\dfrac{3 t_y \bar{y}(\bar{x}^2+\bar{z}^2)-(t_x \bar{x}+t_z \bar{z}) (2 \bar{y}^2-\bar{x}^2-\bar{z}^2)}{r} \right\}
\end{align}
\begin{align}
\label{6D_Ey_zC}\notag
\derpar{e_y}{z_C}=\dfrac{-\oj \chi \mathrm{e}^{-\oj kr}}{r^4} \left\{\vphantom{\frac{1}{1}}\oj k \bar{z} \left[t_y (\bar{x}^2+\bar{z}^2)-\bar{y}(t_x \bar{x}+t_z \bar{z})\right]\right.\\ \left.-\dfrac{t_y \bar{z} (2 \bar{y}^2-\bar{x}^2-\bar{z}^2) +3 t_x \bar{x} \bar{y} \bar{z}- t_z \bar{y} (\bar{x}^2 + \bar{y}^2 -2\bar{z}^2)}{r}\right\}
\end{align}
\end{subequations}

\begin{subequations}
\label{6D_derParEz}
\begin{align}
\label{6D_Ez_xC}\notag
\derpar{e_z}{x_C}=\dfrac{-\oj \chi \mathrm{e}^{-\oj kr}}{r^4} \left\{\vphantom{\frac{1}{1}}\oj k \bar{x} \left[t_z (\bar{x}^2+\bar{y}^2)-\bar{z}(t_x \bar{x}+t_y \bar{y})\right]\right.\\ \left.- \dfrac{t_z \bar{x} (2 \bar{z}^2-\bar{x}^2-\bar{y}^2) +3 t_y \bar{x} \bar{y} \bar{z}- t_x \bar{z} (\bar{z}^2 + \bar{y}^2 -2\bar{x}^2)}{r}\right\}
\end{align}
\begin{align}
\label{6D_Ez_yC}\notag
\derpar{e_z}{y_C}=\dfrac{-\oj \chi \mathrm{e}^{-\oj kr}}{r^4} \left\{\vphantom{\frac{1}{1}}\oj k \bar{y} \left[t_z (\bar{x}^2+\bar{y}^2)-\bar{z}(t_x \bar{x}+t_y \bar{y})\right] \right.\\ \left. - \dfrac{t_z \bar{y} (2 \bar{z}^2-\bar{x}^2-\bar{y}^2) - t_y \bar{z} (\bar{z}^2 -2 \bar{y}^2 +\bar{x}^2)+3 t_x \bar{x} \bar{y} \bar{z}}{r}\right\}
\end{align}
\begin{align}
\label{6D_Ez_zC}\notag
\derpar{e_z}{z_C}=\dfrac{-\oj \chi \mathrm{e}^{-\oj kr}}{r^4}\left\{\vphantom{\frac{1}{1}} \oj k \bar{z} \left[t_z (\bar{x}^2+\bar{y}^2)-\bar{z}(t_x \bar{x}+t_y \bar{y})\right] \right.\\ \left. + \dfrac{3 t_z \bar{z}(\bar{x}^2+\bar{y}^2)-(t_x \bar{x}+t_y \bar{y}) (2 \bar{z}^2-\bar{y}^2-\bar{x}^2)}{r} \right\}.
\end{align}
\end{subequations}
For convenience, we have set $\bar{x}=-x_C$, $\bar{y}=y-y_C$ e $\bar{z}=z-z_C$.

\section*{Appendix B}
In this appendix, we prove the results of Proposition~\ref{CPL_VD}. We start by observing that, under Assumption~\ref{ass:cpl_vertical}, the derivatives needed for the computation of FIM in \eqref{matF} are obtained from the equations in Appendix A by simply setting $y_C=z_C=0$, $t_x=t_y=0$ and $t_z=1$. 

We first computate the elements of $\mathbf{F}_{cc}$ given by
{\begin{equation}
\label{Fcc}
\begin{split}
[\mathbf{F}_{cc}]_{mn}=\dfrac{2}{\sigma^2}\re\left\{\int\limits_{-\frac{L}{2}}^{\frac{L}{2}} \int\limits_{-\frac{L}{2}}^{\frac{L}{2}} f^{(cc)}_{mn}(y,z) dy dz\right\}
\end{split}
\end{equation}}
where 
\begin{equation}
\label{f_cc_mn}
f^{(cc)}_{mn}(y,z)=\derpar{e_x(\vect{r})}{a_m} \derpar{e_x^\ast(\vect{r})}{a_n}+\derpar{e_y(\vect{r})}{a_m} \derpar{e_y^\ast(\vect{r})}{a_n} +\derpar{e_z(\vect{r})}{a_m} \derpar{e_z^\ast(\vect{r})}{a_n} \phantom{\Bigg]}
\end{equation}
with $a_1=x_C$, $a_2=y_C$ and $a_3=z_C$. In this case, it can be shown that, for $m \ne n$, $f^{(cc)}_{mn}(y,z)$ in \eqref{f_cc_mn} is an odd function of $y$ and $z$, i.e.,
$$f^{(cc)}_{mn}(y,z)=-f^{(cc)}_{mn}(-y,z)=-f^{(cc)}_{mn}(y,-z).$$
As a result, due to the symmetry of the integration domain, the off-diagonal elements of $\mathbf{F}_{cc}$ are zero, meaning that $\mathbf{F}_{cc}$ is a diagonal matrix. As for the diagonal elements, standard but lengthy calculations show that they can be written as in \eqref{eq:Fcc11_A4}--\eqref{eq:Fcc33_A4}
where
\begingroup
\allowdisplaybreaks
\begin{align}
\label{I1}
\mathscr{I}_1 \triangleq & \dfrac{\rho}{(4+\rho^2)}\left[\dfrac{(14+3\rho^2)}{\sqrt{4+\rho^2}} \arctan \dfrac{\rho}{\sqrt{4+\rho^2}}+\dfrac{\rho}{(2+\rho^2)}\right]\\
\label{I2}
\mathscr{I}_2 \triangleq &\int\limits_{-\rho/2}^{\rho/2}\int\limits_{-\rho/2}^{\rho/2}\dfrac{1+u^2v^2+v^4}{(1+u^2+v^2)^4}dudv\\
\label{I3}
\mathscr{I}_3 \triangleq & \int\limits_{-\rho/2}^{\rho/2} \int\limits_{-\rho/2}^{\rho/2} \dfrac{u^2(1+u^2)}{(1+u^2+v^2)^3}dudv\\
\label{I4}
\mathscr{I}_4\triangleq & \int\limits_{-\rho/2}^{\rho/2}\int\limits_{-\rho/2}^{\rho/2} \dfrac{u^2(1+u^2) +v^2(1+v^2)-u^2v^2}{(1+u^2+v^2)^4}dudv\\
\label{I5}
\mathscr{I}_5\triangleq &\int\limits_{-\rho/2}^{\rho/2} \int\limits_{-\rho/2}^{\rho/2} \dfrac{v^2(1+u^2)}{(1+u^2+v^2)^3}dudv\\
\label{I6}
\begin{split}
\mathscr{I}_6\triangleq &\dfrac{\rho}{2(4+\rho^{2})^{2}}\Bigg[\dfrac{(9 \rho^4+76 \rho^2 +136)}{\sqrt{4+\rho^2}}\arctan\frac{\rho}{\sqrt{4+\rho^2}} \\ &+\dfrac{\rho(3 \rho^4+4 \rho^2 -8)}{(2+\rho^2)^2} \Bigg]
\end{split}
\end{align}
\endgroup
with $\rho=L/x_C$. 
Now, we consider the computation of $\mathbf{F}_{tt}$ with elements
\begin{equation}
\label{Ftt}
\begin{split}
[\mathbf{F}_{tt}]_{mn}=\dfrac{2}{\sigma^2}\re\left\{\int\limits_{-\frac{L}{2}}^{\frac{L}{2}} \int\limits_{-\frac{L}{2}}^{\frac{L}{2}} f^{(tt)}_{mn}(y,z) dy dz\right\}
\end{split}
\end{equation}
where 
\begin{equation}
\label{f_tt_mn}
f^{(tt)}_{mn}(y,z)=\derpar{e_x(\vect{r})}{b_m} \derpar{e_x^\ast(\vect{r})}{b_n}+\derpar{e_y(\vect{r})}{b_m} \derpar{e_y^\ast(\vect{r})}{b_n} +\derpar{e_z(\vect{r})}{b_m} \derpar{e_z^\ast(\vect{r})}{b_n} \phantom{\Bigg]}
\end{equation}
with $b_1=t_x$, $b_2=t_y$ and $b_3=t_z$.  Again, it can easily be shown that, for $m \ne n$, $f^{(tt)}_{mn}(y,z)$ in \eqref{f_tt_mn} is an odd function of $y$ or $z$, and hence the off-diagonal elements of $\mathbf{F}_{tt}$ are zero, meaning that $\mathbf{F}_{tt}$ is diagonal. The diagonal elements can be written as in \eqref{Ftt_11}--\eqref{Ftt_22}
where
\begin{align}
\label{I7}
\mathscr{I}_7 \triangleq &\int\limits_{-\rho/2}^{\rho/2}\int\limits_{-\rho/2}^{\rho/2}\dfrac{u^2+v^2}{(1+u^2+v^2)^2}dudv\\
\label{I8}
\mathscr{I}_8 \triangleq & \int\limits_{-\rho/2}^{\rho/2} \int\limits_{-\rho/2}^{\rho/2} \dfrac{1+u^2}{(1+u^2+v^2)^2}dudv.
\end{align}
Consider now the matrix $\mathbf{F}_{tc}$ with elements given by
{\begin{equation}
\label{Ftc}
\begin{split}
[\mathbf{F}_{tc}]_{mn}=\dfrac{2}{\sigma^2}\re\left\{\int\limits_{-\frac{L}{2}}^{\frac{L}{2}} \int\limits_{-\frac{L}{2}}^{\frac{L}{2}} f^{(tc)}_{mn}(y,z) dy dz \right\}
\end{split}
\end{equation}}
where 
\begin{equation}
\label{f_tc_mn}
f^{(tc)}_{mn}(y,z)=\derpar{e_x(\vect{r})}{b_m} \derpar{e_x^\ast(\vect{r})}{a_n}+\derpar{e_y(\vect{r})}{b_m} \derpar{e_y^\ast(\vect{r})}{a_n} +\derpar{e_z(\vect{r})}{b_m} \derpar{e_z^\ast(\vect{r})}{a_n} \phantom{\Bigg]}
\end{equation}
and $a_n,b_m$ have the same meaning as before. It can be shown that $f^{(tc)}_{13}(y,z)$ and $f^{(tc)}_{31}(y,z)$ are even functions of $y$ and $z$ whereas the others are odd functions of $y$ or $z$. {As a consequence, the only non-zero entries of $\mathbf{F}_{tc}$ are  $[\mathbf{F}_{tc}]_{13}$ and $[\mathbf{F}_{tc}]_{31}$ given by \eqref{Ftc13} and \eqref{Ftc31}, respectively, where
\begin{align}
\label{I9}
\mathscr{I}_9&\triangleq  \dfrac{2 \rho}{4+\rho^{2}} \left[\dfrac{2+\rho^{2}}{\sqrt{4+\rho^{2}}}\arctan\dfrac{\rho}{\sqrt{4+\rho^2}} -\dfrac{\rho}{2+\rho^2}\right]
\end{align}
\begin{align}
\label{I10}
\mathscr{I}_{10}&\triangleq \int\limits_{-\rho/2}^{\rho/2}\int\limits_{-\rho/2}^{\rho/2} \dfrac{(1+2u^2)}{(1+u^2+v^2)^{4}}dudv.
\end{align}}

\section*{Appendix C}
The proof of Proposition 2 is given. {We start by deriving \eqref{eq:crb_ori_x_app1}. As a first step, we show that $k^2 \mathscr{I}_1 \gg x_C^{-2}\mathscr{I}_2$, for $x_C \gg \lambda$. To this end, observe that \begin{equation}
\label{I1_new}
\mathscr{I}_1=\int\limits_{-\rho/2}^{\rho/2}\int\limits_{-\rho/2}^{\rho/2} \dfrac{1+v^2}{(1+u^2+v^2)^3}dudv.
\end{equation}
Comparing \eqref{I1_new} and \eqref{I2} yields $\mathscr{I}_1 > \mathscr{I}_2$ since
\begin{equation}
\dfrac{1+v^2}{(1+u^2+v^2)^3} \ge \dfrac{1+u^2v^2+v^4}{(1+u^2+v^2)^4}.
\end{equation}
Since $0 \le \mathscr{I}_{2} - \mathscr{I}_8^{-1} \mathscr{I}_{10}^{2} < \mathscr{I}_{2}$ (notice that $\mathscr{I}_2 - \mathscr{I}_8^{-1} \mathscr{I}_{10}^{2}$ must necessarily be non-negative and $\mathscr{I}_8^{-1} \mathscr{I}_{10}^{2} >0$), for $x_C \gg \lambda$ we have
\begin{equation}
\label{I1_gg_I2}
4 \pi^2 \lambda^{-2} \mathscr{I}_1 \gg x_C^{-2}\mathscr{I}_2 > x_C^{-2}(\mathscr{I}_{2} - \mathscr{I}_8^{-1} \mathscr{I}_{10}^{2}).
\end{equation} 
The approximation in \eqref{eq:crb_ori_x_app1} derives from \eqref{I1_gg_I2} immediately.
}

We now derive \eqref{eq:crb_ori_y_app1} by showing that $k^2 \mathscr{I}_3 \gg x_C^{-2}\mathscr{I}_4$. Observe that
\begin{align}
\label{}
0 < \mathscr{I}_4 < \int\limits_{-\rho/2}^{\rho/2} \int\limits_{-\rho/2}^{\rho/2} \dfrac{2u^2(1+u^2)}{(1+u^2+v^2)^4} dudv < 2 \, \mathscr{I}_3
\end{align}
from which we easily obtain
\begin{equation}
\label{I3ggI4}
4 \pi^2 \lambda^{-2} \mathscr{I}_3 \gg x_C^{-2}\mathscr{I}_4
\end{equation}
for $x_C \gg \lambda$. The approximation in \eqref{eq:crb_ori_y_app1} follows from \eqref{I3ggI4} immediately.

\section*{Appendix D}
\label{app:calc6D}
{The analysis of the CRBs in the asymptotic regime $\rho \to \infty$ requires the computation of the following limits:
\begin{enumerate}
  \item $\lim\limits_{\rho \to \infty} \mathscr{I}_1$
  \item $\lim\limits_{\rho \to \infty} \mathscr{I}_3$
  \item $\lim\limits_{\rho \to \infty} \left(k^2 \mathscr{I}_5+  x_C^{-2}\mathscr{I}_6\right)$
  \item $\lim\limits_{\rho \to \infty} \left[k^2 \mathscr{I}_5+  x_C^{-2}\left(\mathscr{I}_6-\mathscr{I}_7^{-1}\mathscr{I}_{9}^2\right)\right]$. 
\end{enumerate}
 We start by evaluating the first. From \eqref{I1}, we have
 \begin{equation}
\label{limI1}
\lim\limits_{\rho \to \infty} \mathscr{I}_1=3 \pi /4.
\end{equation}}

Consider now $\mathscr{I}_3$. Due to the non-negativity of the function $u^2(1+u^2)/(1+u^2+v^2)^3$, we have
\begin{equation}
\label{I3LUB}
\int\limits_{\mathcal{C}^-} \dfrac{u^2(1+u^2)}{(1+u^2+v^2)^3}dudv < \mathscr{I}_3 < \int\limits_{\mathcal{C}^+} \dfrac{u^2(1+u^2)}{(1+u^2+v^2)^3}dudv
\end{equation}
The two integrals can be computed in closed form. In particular, we obtain
\begin{equation}
\label{I3LB}
\int\limits_{\mathcal{C}^-} \dfrac{u^2(1+u^2)}{(1+u^2+v^2)^3}dudv=\dfrac{3\pi}{8} \ln (1+ \rho^2) - \dfrac{\pi}{16} \dfrac{\rho^2(5 \rho^2 +6)}{(1+\rho^2)^2}
\end{equation}
\begin{equation}
\label{I3UB}
\int\limits_{\mathcal{C}^+} \dfrac{u^2(1+u^2)}{(1+u^2+v^2)^3}dudv=\dfrac{3\pi}{8} \ln (1+ 2 \rho^2) - \dfrac{\pi}{4} \dfrac{\rho^2(5 \rho^2 +3)}{(1+2\rho^2)^2}.
\end{equation}
Taking \eqref{I3LUB} and \eqref{I3LB}--\eqref{I3UB} into account yields
\begin{equation}
\label{limI3}
\mathscr{I}_3 \sim \dfrac{3\pi}{4} \ln \rho \qquad \textrm{as } \rho \to \infty
\end{equation}
from which \eqref{eq:CRBYtoINF} is derived straightforwardly.

{Now, we focus on the limits
\begin{equation}
\lim\limits_{\rho \to \infty} \left(k^2\mathscr{I}_5 +  x_C^{-2}\mathscr{I}_6\right)
\label{limLast}
\end{equation}
and
\begin{equation}
\label{lim4}
\lim\limits_{\rho \to \infty} \left[k^2 \mathscr{I}_5+  x_C^{-2}\left(\mathscr{I}_6-\mathscr{I}_7^{-1}\mathscr{I}_{9}^2\right)\right].
\end{equation}
By using similar arguments to those for $\mathscr{I}_3$, it can be shown that 
\begin{equation}
\label{limI5}
\mathscr{I}_5 \sim \dfrac{\pi}{4} \ln \rho \qquad \textrm{as } \rho \to \infty.
\end{equation}
and
\begin{equation}
\label{limI7}
\lim\limits_{\rho \to \infty} \mathscr{I}_7 = \infty.
\end{equation}
As for $\mathscr{I}_6$ and $\mathscr{I}_9$, from \eqref{I6} and \eqref{I9} we have that
\begin{equation}
\label{limI6}
\lim\limits_{\rho \to \infty} \mathscr{I}_6=9 \pi /8.
\end{equation}
and
\begin{equation}
\label{limI9}
\lim\limits_{\rho \to \infty} \mathscr{I}_9=\pi /2.
\end{equation}
By taking \eqref{limI5}--\eqref{limI9} into account, \eqref{eq:CRBZtoINF} follows easily.}

\bibliographystyle{IEEEtran}
\bibliography{refs}

\begin{thebibliography}{10}
\providecommand{\url}[1]{#1}
\csname url@samestyle\endcsname
\providecommand{\newblock}{\relax}
\providecommand{\bibinfo}[2]{#2}
\providecommand{\BIBentrySTDinterwordspacing}{\spaceskip=0pt\relax}
\providecommand{\BIBentryALTinterwordstretchfactor}{4}
\providecommand{\BIBentryALTinterwordspacing}{\spaceskip=\fontdimen2\font plus
\BIBentryALTinterwordstretchfactor\fontdimen3\font minus
  \fontdimen4\font\relax}
\providecommand{\BIBforeignlanguage}[2]{{%
\expandafter\ifx\csname l@#1\endcsname\relax
\typeout{** WARNING: IEEEtran.bst: No hyphenation pattern has been}%
\typeout{** loaded for the language `#1'. Using the pattern for}%
\typeout{** the default language instead.}%
\else
\language=\csname l@#1\endcsname
\fi
#2}}
\providecommand{\BIBdecl}{\relax}
\BIBdecl

\bibitem{torres2021cramerrao}
\BIBentryALTinterwordspacing
A.~D.~J. Torres, A.~A. D'Amico, L.~Sanguinetti, and M.~Z. Win, ``{Cram\'er-Rao}
  bounds for near-field localization,'' in \emph{Asilomar Conference on
  Signals, Systems, and Computers}, 2021. [Online]. Available:
  \url{https://arxiv.org/abs/2104.14825}
\BIBentrySTDinterwordspacing

\bibitem{3gppRel16}
3GPP, ``Technical specification group services and system aspects (release
  16),'' \emph{TR 21.916 V16.1.0}, Jan. 2022.

\bibitem{Lee2018Spectrum5G}
J.~{Lee}, E.~{Tejedor}, K.~{Ranta-aho}, H.~{Wang}, K.~T. {Lee}, E.~{Semaan},
  E.~{Mohyeldin}, J.~{Song}, C.~{Bergljung}, and S.~{Jung}, ``Spectrum for
  {5G}: Global status, challenges, and enabling technologies,'' \emph{IEEE
  Commun. Mag.}, vol.~56, no.~3, pp. 12--18, 2018.

\bibitem{dardari2020holographic}
D.~Dardari and N.~Decarli, ``Holographic communication using intelligent
  surfaces,'' \emph{IEEE Commun. Mag.}, vol.~59, no.~6, pp. 35--41, 2021.

\bibitem{Huang2020}
C.~{Huang}, S.~{Hu}, G.~C. {Alexandropoulos}, A.~{Zappone}, C.~{Yuen},
  R.~{Zhang}, M.~D. {Renzo}, and M.~{Debbah}, ``Holographic {MIMO} surfaces for
  {6G} wireless networks: Opportunities, challenges, and trends,'' \emph{IEEE
  Wireless Communications}, vol.~27, no.~5, pp. 118--125, 2020.

\bibitem{Pizzo2019a}
A.~Pizzo, T.~L. Marzetta, and L.~Sanguinetti, ``Spatially-stationary model for
  {Holographic} {MIMO} small-scale fading,'' \emph{IEEE J. Sel. Areas Commun.},
  vol.~38, no.~9, pp. 1964--1979, 2020.

\bibitem{Hu2018a}
S.~Hu, F.~Rusek, and O.~Edfors, ``Beyond massive {MIMO}: The potential of data
  transmission with large intelligent surfaces,'' vol.~66, no.~10, pp.
  2746--2758, 2018.

\bibitem{Hu2018b}
------, ``Beyond massive {MIMO}: The potential of positioning with large
  intelligent surfaces,'' \emph{IEEE Trans. Signal Process.}, vol.~66, no.~7,
  pp. 1761--1774, 2018.

\bibitem{Basar2019a}
E.~{Basar}, M.~{Di Renzo}, J.~{De Rosny}, M.~{Debbah}, M.~{Alouini}, and
  R.~{Zhang}, ``Wireless communications through reconfigurable intelligent
  surfaces,'' \emph{IEEE Access}, vol.~7, pp. 116\,753--116\,773, 2019.

\bibitem{DiRenzo2020}
M.~{Di Renzo}, A.~{Zappone}, M.~{Debbah}, M.~S. {Alouini}, C.~{Yuen}, J.~{de
  Rosny}, and S.~{Tretyakov}, ``Smart radio environments empowered by
  reconfigurable intelligent surfaces: How it works, state of research, and the
  road ahead,'' \emph{IEEE J. Sel. Areas Commun.}, vol.~38, no.~11, pp.
  2450--2525, 2020.

\bibitem{bourdoux20206g}
\BIBentryALTinterwordspacing
A.~Bourdoux, A.~N. Barreto, B.~van Liempd, C.~de~Lima, D.~Dardari, D.~Belot,
  E.-S. Lohan, G.~Seco-Granados, H.~Sarieddeen, H.~Wymeersch, J.~Suutala,
  J.~Saloranta, M.~Guillaud, M.~Isomursu, M.~Valkama, M.~R.~K. Aziz,
  R.~Berkvens, T.~Sanguanpuak, T.~Svensson, and Y.~Miao, ``{6G} white paper on
  localization and sensing,'' 2020. [Online]. Available:
  \url{https://arxiv.org/abs/2006.01779}
\BIBentrySTDinterwordspacing

\bibitem{Benjamin2019}
B.~Friedlander, ``Localization of signals in the near-field of an antenna
  array,'' \emph{IEEE Trans. Signal Process.}, vol.~67, no.~15, pp. 3885--3893,
  2019.

\bibitem{Ingram2005}
J.-S. Jiang and M.~Ingram, ``Spherical-wave model for short-range {MIMO},''
  \emph{IEEE Trans. Commun.}, vol.~53, no.~9, pp. 1534--1541, 2005.

\bibitem{Bohagen2009}
F.~Bohagen, P.~Orten, and G.~E. Oien, ``On spherical vs. plane wave modeling of
  line-of-sight {MIMO} channels,'' \emph{IEEE Trans. Commun.}, vol.~57, no.~3,
  pp. 841--849, 2009.

\bibitem{JinLensArray2020}
J.~Yang, Y.~Zeng, S.~Jin, C.-K. Wen, and P.~Xu, ``Communication and
  localization with extremely large lens antenna array,'' \emph{IEEE Trans.
  Wireless Commun.}, vol.~20, no.~5, pp. 3031--3048, 2021.

\bibitem{AlegriaLISPositioning2019}
J.~V. {Alegria} and F.~{Rusek}, ``{Cram{\'e}r-Rao} lower bounds for positioning
  with large intelligent surfaces using quantized amplitude and phase,'' in
  \emph{Asilomar Conference on Signals, Systems, and Computers}, 2019, pp.
  10--14.

\bibitem{Dardari2021}
F.~Guidi and D.~Dardari, ``Radio positioning with {EM} processing of the
  spherical wavefront,'' \emph{IEEE Trans. Wireless Commun.}, vol.~20, no.~6,
  pp. 3571--3586, 2021.

\bibitem{DELMAS2016117}
\BIBentryALTinterwordspacing
J.~P. Delmas, M.~N. {El Korso}, H.~Gazzah, and M.~Castella, ``{CRB} analysis of
  planar antenna arrays for optimizing near-field source localization,''
  \emph{Signal Process.}, vol. 127, pp. 117--134, 2016. [Online]. Available:
  \url{https://www.sciencedirect.com/science/article/pii/S0165168416000815}
\BIBentrySTDinterwordspacing

\bibitem{Cadre1995}
J.-P. Le~Cadre, ``Performance analysis of wavefront curvature methods for range
  estimation of a moving source,'' \emph{IEEE Trans. Aerosp. Electron. Syst.},
  vol.~31, no.~3, pp. 1082--1103, 1995.

\bibitem{Grosicki2005}
E.~Grosicki, K.~Abed-Meraim, and Y.~Hua, ``A weighted linear prediction method
  for near-field source localization,'' \emph{IEEE Trans. Signal Process.},
  vol.~53, no.~10, pp. 3651--3660, 2005.

\bibitem{Korso2012}
M.~N. El~Korso, A.~Renaux, R.~Boyer, and S.~Marcos, ``Deterministic performance
  bounds on the mean square error for near field source localization,''
  \emph{IEEE Trans. Signal Process.}, vol.~61, no.~4, pp. 871--877, 2013.

\bibitem{Siwei2019}
S.~Zhang, T.~Jost, R.~P{\"o}hlmann, A.~Dammann, D.~Shutin, and P.~A. Hoeher,
  ``Spherical wave positioning based on curvature of arrival by an antenna
  array,'' \emph{IEEE Wireless Commun. Lett.}, vol.~8, no.~2, pp. 504--507,
  2019.

\bibitem{Xuefeng2017}
X.~Yin, S.~Wang, N.~Zhang, and B.~Ai, ``Scatterer localization using
  large-scale antenna arrays based on a spherical wave-front parametric
  model,'' \emph{IEEE Trans. Wireless Commun.}, vol.~16, no.~10, pp.
  6543--6556, 2017.

\bibitem{Guerra2021}
A.~Guerra, F.~Guidi, D.~Dardari, and P.~M. Djuric, ``Near-field tracking with
  large antenna arrays: Fundamental limits and practical algorithms,''
  \emph{IEEE Trans. Signal Process.}, pp. 1--1, 2021.

\bibitem{OrfanidisBook}
S.~J. Orfanidis, \emph{Electromagnetic Waves and Antennas}.\hskip 1em plus
  0.5em minus 0.4em\relax [Online]. Available:
  http://www.ece.rutgers.edu/~orfanidi/ewa/, 2008.

\bibitem{Tsilipakos2020}
e.~a. Odysseas~Tsilipakos, ``Toward intelligent metasurfaces: The progress from
  globally tunable metasurfaces to software-defined metasurfaces with an
  embedded network of controllers,'' \emph{Advanced Optimal Material}, 2020.

\bibitem{ChewBook}
W.~C. Chew, \emph{Waves and Fields in Inhomogenous Media}.\hskip 1em plus 0.5em
  minus 0.4em\relax Wiley-IEEE Press, 1995.

\bibitem{poon2005degrees}
A.~S.~Y. Poon, R.~W. Brodersen, and D.~N.~C. Tse, ``Degrees of freedom in
  multiple-antenna channels: {A} signal space approach,'' \emph{IEEE
  Transactions on Information Theory}, vol.~51, no.~2, pp. 523--536, 2005.

\bibitem{Jensen2008channel}
M.~A. {Jensen} and J.~W. {Wallace}, ``Capacity of the continuous-space
  electromagnetic channel,'' \emph{IEEE Trans. Antennas Propag.}, vol.~56,
  no.~2, pp. 524--531, 2008.

\bibitem{Gruber2008EIT}
F.~K. {Gruber} and E.~A. {Marengo}, ``New aspects of electromagnetic
  information theory for wireless and antenna systems,'' \emph{IEEE
  Transactions on Antennas and Propagation}, vol.~56, no.~11, pp. 3470--3484,
  2008.

\bibitem{Kay1993a}
S.~M. Kay, \emph{Fundamentals of statistical signal processing: Estimation
  theory}.\hskip 1em plus 0.5em minus 0.4em\relax Prentice Hall, 1993.

\bibitem{Witrisal5Gaccuracy}
K.~Witrisal, P.~Meissner, E.~Leitinger, Y.~Shen, C.~Gustafson, F.~Tufvesson,
  K.~Haneda, D.~Dardari, A.~Molisch, A.~Conti, and M.~Win, ``High-accuracy
  localization for assisted living: {5G} systems will turn multipath channels
  from foe to friend,'' \emph{IEEE Signal Process. Mag.}, vol.~33, pp. 59--70,
  03 2016.

\bibitem{white1982maximum}
H.~White, ``Maximum likelihood estimation of misspecified models,''
  \emph{Econometrica: Journal of the econometric society}, pp. 1--25, 1982.

\bibitem{richmond2015parameter}
C.~D. Richmond and L.~L. Horowitz, ``Parameter bounds on estimation accuracy
  under model misspecification,'' \emph{IEEE Transactions on Signal
  Processing}, vol.~63, no.~9, pp. 2263--2278, 2015.

\bibitem{fortunati2017performance}
S.~Fortunati, F.~Gini, M.~S. Greco, and C.~D. Richmond, ``Performance bounds
  for parameter estimation under misspecified models: Fundamental findings and
  applications,'' \emph{IEEE Signal Processing Magazine}, vol.~34, no.~6, pp.
  142--157, 2017.

\end{thebibliography}

\end{document}